\newcommand{\setdef}[2]{\left\{ #1\, \middle|\, \begin{matrix} #2 \end{matrix} \right\}}
\newcommand{\interpret}[1]{{\left\llbracket#1\right\rrbracket}}
\def\probbranch{\texttt{Pb}}
\def\stskip{skip}
\def\assume{\texttt{assume}}
\newcommand{\nextFullProb}[2][]{^{#1}\left[#2\right]}
\newcommand{\prob}[1][]{\mathbb{P}_{#1}\nextFullProb}
\newcommand{\bs}[1]{{\left\{#1\right\}}}
\def\dom{{\text{dom}}}
\def\scriptA{\mathcal{A}}
\def\scriptF{\mathcal{F}}
\def\scriptL{\mathcal{L}}
\def\scriptM{\mathcal{M}}
\def\scriptS{\mathcal{S}}
\def\doubleP{\mathbb{P}}
\def\mergeable{{\texttt{Mergeable}}}
\def\lbranch{{\texttt{L}}}
\def\rbranch{{\texttt{R}}}
\newcommand{\oriLprobbranch}[1][i]{{\probbranch_{(#1, \lbranch)}}}
\newcommand{\oriRprobbranch}[1][i]{{\probbranch_{(#1, \rbranch)}}}
\newcommand{\oriDprobbranch}[1][i]{{\probbranch_{(#1, d)}}}
\def\lprobbranch{\oriLprobbranch}
\def\rprobbranch{\oriRprobbranch}
\def\dprobbranch{\oriDprobbranch}
\newcommand{\stwhile}[2]{\textbf{while } #1 \textbf{ do} #2 \textbf{done}}
\def\tTrue{\texttt{True}}
\def\tFalse{\texttt{False}}
\newcommand{\zap}[1]{}
\crefname{fact}{Fact}{Facts}
\crefname{remark}{Remark}{Remarks}
\newcommand{\cbx}[1]{
  \lfbox[ 
    rounded,
    background-color=gray!25,
    border-color=gray!25,
    border-radius=4pt,
    height-align=middle,
    height=5.5pt,
  ]{#1}
}
\newdimen\@tempdimd
\begin{document}

\newtheorem{remark}[theorem]{Remark}
\newtheorem{fact}[theorem]{Fact}

\title[\textsc{ProbTA}]{\textsc{ProbTA} : A Sound and Complete Proof Rule for Probabilistic Verification}         


\author{Guanyan Li}
\orcid{0000-0002-4163-1840}             
\affiliation{
  \department{Computer Science}              
  \institution{Tsinghua University}            
  \city{Beijing}
  \country{China}                    
}
\affiliation{
  \department{Computer Science}             
  \institution{University of Oxford}           
  \city{Oxford}
  \country{United Kingdom}                   
}
\email{guanyan.li@cs.ox.ac.uk}          

\author{Zhilei Han}
\orcid{0000-0001-9171-4997}             
\affiliation{
  \department{School of Software}             
  \institution{Tsinghua University}           
  \city{Beijing}
  \country{China}                   
}
\email{hzl21@mails.tsinghua.edu.cn}         

\author{Fei He}
\affiliation{
  \position{Associate Professor}
  \department{School of Software}             
  \institution{Tsinghua University}           
  \city{Beijing}
  \country{China}                   
}
\email{hefei@tsinghua.edu.cn}

\begin{abstract}
We propose a sound and complete proof rule \textsc{ProbTA} for quantitative analysis of violation probability of probabilistic programs.
Our approach extends the technique of trace abstraction with probability in the control-flow randomness style, in contrast to previous work of combining trace abstraction and probabilisitic verification which adopts the data randomness style.
In our method, a program specification is proved or disproved by
decomposing the program into different modules of traces.
Precise quantitative analysis is enabled by novel models proposed to bridge program verification and probability theory. 
Based on the proof rule, we propose a new automated algorithm via CEGAR involving multiple technical issues unprecedented in non-probabilistic trace abstraction and data randomness-based approach.
\end{abstract}

\begin{CCSXML}
<ccs2012>
<concept>
<concept_id>10011007.10011006.10011008</concept_id>
<concept_desc>Software and its engineering~General programming languages</concept_desc>
<concept_significance>500</concept_significance>
</concept>
<concept>
<concept_id>10003456.10003457.10003521.10003525</concept_id>
<concept_desc>Social and professional topics~History of programming languages</concept_desc>
<concept_significance>300</concept_significance>
</concept>
</ccs2012>
\end{CCSXML}

\ccsdesc[500]{Software and its engineering~General programming languages}
\ccsdesc[300]{Social and professional topics~History of programming languages}

\keywords{Probabilistic programming, Trace abstraction, Automated verification}  

\maketitle

\section{Introduction}

\label{sec: intro}

Since the early days of computer science, formalisms for reasoning about probability have been widely studied. 
Extending classical imperative programs, probabilistic programs enable efficient solution to many algorithmic problems~\cite{efficiency_of_probabilistic_algorithm_1,efficiency_of_probabilistic_algorithm_2,efficiency_of_probabilistic_algorithm_3}.
It has also led to new models which play key roles in cryptography~\cite{encrption_using_probability}, linguistics~\cite{plcfrs} and especially machine learning where generic models \cite{ppl_anglican,ppl_church,ppl_stan} are expressed thereby.

In this paper, we revisit the verification problem of probabilistic programs. More specifically, we focus on quantitative analysis of violation probability of probabilistic programs~\cite{exponential_analysis_of_probabilistic_program,trace_abstraction_modulo_probability} which can be described as follows: given a probabilistic program $P$, a precondition $\varphi_e$, a postcondition $\varphi_f$ and a threshold $\beta$, answer the question that whether the probability of violating the Hoare-triple $\bs{\varphi_e}\ P\ \bs{\varphi_f}$ is no greater than $\beta$. Following the notation of \cite{trace_abstraction_modulo_probability}, we denote this by: 
$$ \vdash_\beta \bs{\varphi_e}\ P\ \bs{\varphi_f} $$

One of the distinct features of our approach is the combination of probabilistic program verification with trace abstraction \cite{trace_abstraction}, a technique that has been successfully employed in wide areas of non-probabilistic program verification and analysis \cite{site_trace_abstraction,trace_abstraction_termination_cav14,trace_abstraction_incremental,trace_abstraction_termination_incremental}.
Trace abstraction is a \emph{decomposition}-based technique in that if one can decompose a program, regarded as an automaton accepting control flow traces, into a set of certified automata whose accepting traces are guaranteed to be non-violating, then one is able to declare the non-violation of the program against the given specification.

The combination of probabilistic verification and trace abstraction has been pioneered by Smith et al.
In their recent work~\cite{trace_abstraction_modulo_probability}, the authors wisely handle the effects of probability within a trace by assigning probabilities to program expressions,
and utilising program synthesis techniques to model the effects of probabilistic distributions which may affect the validity of the Hoare-triple.
Despite the ability of their method to handle quite complex and interesting examples, its incompleteness is obvious.
As is discussed in~\cite{trace_abstraction_modulo_probability}, the automated algorithm proposed cannot handle the case $\bs{\texttt{True}}\ X \sim Flip(0.5); Y \sim Flip(0.5) \bs{X \wedge Y}$,
where $1$ is always returned as the upper bound.
Although optimisations are available, a fundamental improvement is still hard to obtain.

The approach of~\cite{trace_abstraction_modulo_probability} is classified as an instance of \emph{data randomness} in a recent work by Wang et al.~\cite{pmaf}, where the consequence of such methods is analysed more thoroughly.
In this work, however, we are interested in another edge of the spectrum: the \emph{control-flow randomness} introduced in~\cite{pmaf},
which provides a new trace-based perspective of programs where probabilities are considered on the level of statements rather than expressions, 
and the effects of probabilistic choices are modeled directly on the control flow. This also enables \emph{non-determinism}, an important feature in the context of probabilistic verification.

Nevertheless, it brings a major challenge we need to address: how to compute the probability of a set of traces (against some specification)? In a data randomness-based approach like~\cite{trace_abstraction_modulo_probability}, one just needs to sum up the probability of all traces without considering interdependency, since information required for probability computation is fully presented on each trace. 
In our case, however, every trace only contains partial information, and probability computation depends heavily on the structure used to represent the set of traces. Since these structures are generally not unique, we must be able to compute probability regardless of the underlying structure. As a consequence, a fundamentally different proof technique is required.

To tackle this problem, we propose \emph{control-flow Markov decision process (CFMDP)} and \emph{control-flow Markov chain (CFMC)}, two natural extensions of the traditional probabilistic models MDP and MC with control flow information, which can be used to represent the trace set.
These models serve as bridges between program verification and the classical theory of stochastic process, enabling precise quantitative analysis by using existing techniques.
By proper supportive definitions of probabilities from the view of traces,
we show that \emph{normalised} (in a sense that is explained in \cref{sec: the proof rule}) CFMDP enables probability computation of trace sets.
Based on these models, we propose a new sound and complete \emph{proof rule} called \textsc{ProbTA} for probabilistic program verification, extending trace abstraction with probability in a distinct manner. 

Furthermore, we present an automated algorithm based on \emph{Counterexample-guided Abstraction Refinement(CEGAR)} to apply our new proof rule.
The algorithm naturally extends the non-probabilistic CEGAR-based trace abstraction with another routine to tackle violating but tolerable traces.
In the algorithm, we address multiple technical issues encountered in neither non-probabilistic trace abstraction nor data randomness-based approach,
such as generalisation of probable traces and probabilistic counterexample extractions.
Given all these, we provide a relatively comprehensive framework of probabilistic trace abstraction with probability in the control-flow randomness style.

To sum up, our main contributions are:

\begin{itemize}
	\item We present CFMDP and CFMC, two natural extensions of existing models equipped with control flow information. We show the models enable reasoning of probabilistic behaviours in a control-flow randomness-based perspective. 
	\item We propose a sound and complete proof rule \textsc{ProbTA} which
        extends the traditional trace abstraction with probaility. 
	\item We present an automated algorithm making use of \textsc{ProbTA} via
        CEGAR, solving several interesting technical issues that are not previously encountered.
\end{itemize}

The rest of the paper is organised as follows.
After demonstrating the basic idea of our proof rule on a motivating example in \cref{sec: motivation example and overview}
and objects including CFMDP and CFMC and the problem to solve is formally presented in \cref{sec: preliminaries},
\cref{sec: the proof rule} presents our proof rules and proves its property,
followed by \cref{sec: algorithm} where an automated algorithm using the new proof rule is proposed.
Finally we survey and compare related works in \cref{sec: related work} and conclude in \cref{sec: conlusion}.

\section{Motivating Example}
\label{sec: motivation example and overview}


In this section, we illustrate our method by a motivating example followed by discussion.

\subsection{Motivating Example}

\begin{example}
	\label{example: motivation}
    Consider the following program, where $X$ and $C$ are integer variables and 
    $\oplus$ is the fair binary probabilistic choice operator:
	\begin{align*}
		1\quad & \bs{\texttt{True}} \\
		2\quad & X ::= 0; \\
		3\quad & C ::= 0 \oplus \stskip; \\
		4\quad & \stwhile{C > 0}{\\
		5\quad & \qquad X ::= X + 1 \oplus \stskip; \\
		6\quad & \qquad C ::= C - 1 \\
		7\quad & } \\
		8\quad & \bs{X = 0}
	\end{align*}
\end{example}

The program could be regarded as a coin flipping game between two players, Alice and Bob. 
They make an initial toss first (line 3), where if the coin is head-down, the game ends and Alice wins.
Otherwise, Bob decides a number (the variable $C$) of rounds to continue playing. 
Alice continues guessing no head-up occurs during the additional rounds. 
As the actual number of head-up(s) is kept in the variable $X$ (line 5), the postcondition $X = 0$ means Alice wins the game.
An interesting property of this example is that the probability of Bob winning the game, corresponding to the violation of the postcondition, increases as the initial value of $C$ increases. But the upper bound of the violation probability, $0.5$, is only attained when $C \to \infty$. 

We consider the following question about the example: does the postcondition hold up to a tolerant probability $\beta := 0.3$ 
(is the probability of Bob winning the game no greater than $0.3$)? 
If not, what is a possible counterexample 
(what is a possible initial value of $C$ such that the probability of Bob winning the game is greater than $0.3$, and under this circumstance, what is a possible playing sequence leading to violation)?

\begin{figure}[h]
	\begin{tikzpicture}[yscale=0.7]
    \node at (0, 1) (N0) {};
    \node[draw, circle] at (0, 0) (N1) {};
    \node[draw, circle] at (0, -1) (N2) {};
    \node[draw, circle] at (0.5, -2) (N3) {};
    \node[draw, circle] at (-0.5, -2) (N4) {};
    \node[draw, circle] at (0, -3) (N5) {};
    \node[draw, circle] at (0, -4) (N6) {};
    \node[draw, circle] at (-0.5, -5) (N7) {};
    \node[draw, circle] at (0.5, -5) (N8) {};
    \node[draw, circle] at (0, -6) (N9) {};
    \node[draw, circle, accepting] at (0, -7) (NE) {};

    \draw[-latex] (N0) -- (N1);
    \draw[-latex] (N1) -- (N2)  node[midway, right] {\scriptsize $X::=0$};
    \draw[-latex] (N2) -- (N3)  node[midway, right] {\scriptsize $\probbranch_{(0,R)}$};
    \draw[-latex] (N2) -- (N4)  node[midway, left] {\scriptsize $\probbranch_{(0,L)}$};
    \draw[-latex] (N3) -- (N5)  node[midway, right] {\scriptsize $\stskip$};
    \draw[-latex] (N4) -- (N5)  node[midway, left] {\scriptsize $C::=0$};
    \draw[-latex] (N5) -- (N6)  node[midway, left] {\scriptsize $\assume\ C>0$};
    \draw[-latex] (N6) -- (N7)  node[midway, left] {\scriptsize $\probbranch_{(1,L)}$};
    \draw[-latex] (N6) -- (N8)  node[midway, right] {\scriptsize $\probbranch_{(1,R)}$};
    \draw[-latex] (N7) -- (N9)  node[midway, left] {\scriptsize $\stskip$};
    \draw[-latex] (N8) -- (N9)  node[midway, right] {\scriptsize $X::=X+1$};
    \draw[-latex] (N9) -- (0, -6.5) -- (2, -6.5) -- (2, -3) node[midway, right]  {\scriptsize $C ::= C - 1$} -- (N5);
    \draw[-latex] (N5) -- (-2, -3) -- (-2, -7) node[midway, left]  {\scriptsize $\assume\ \neg(C>0)$} -- (NE);
\end{tikzpicture}
	\caption{CFA $\mathcal{A}$ of \cref{example: motivation}}
	\label{figure: motivation example cfa}
\end{figure}

We first transform the program into a \emph{probabilistic control-flow automaton} (abbreviated below as PCFA or simply CFA) $\mathcal{A}$, which is depicted in \cref{figure: motivation example cfa}. 
The translation is a standard routine~\cite{ast_by_omega_decompose} with special handling of probabilistic choice statements: each branch of the probabilistic choice is represented by $\probbranch_{(i, D)}$, where $i$ is the identifier syntacically assigned to each probabilistic choice statement $s_1 \oplus s_2$, such as $0$ and $1$ for the statements in line 3 and 5 respectively, while $D$ distinguish the two branches of the $\oplus$ operator by $\texttt{L}$ (for Left) or $\texttt{R}$ (for Right).


Then, we proceed our proof by showing the traces of the program, or equivalently the language of $\mathcal{A}$, are covered by the three automata $\mathcal{A}_1, \mathcal{A}_2$ and $\mathcal{A}_3$ presented\footnote{The set of all labels on \cref{figure: motivation example cfa} is denoted by $\Sigma$.} in \cref{figure: motivation example category 1}, \cref{figure: motivation example category 2} and \cref{figure: motivation example violation automaton}, so that $\scriptL(\mathcal{A}) \subseteq \scriptL(\mathcal{A}_1) \cup \scriptL(\mathcal{A}_2) \cup \scriptL(\mathcal{A}_3)$. 
These automata are a variant of interpolant automata~\cite{trace_abstraction} 
in which every state is labeled with a proposition and every transition with label $\sigma \in \Sigma$ entails the validity of the Hoare triple $\bs{\varphi_s}\ \sigma\ \bs{\varphi_t}$, where $\varphi_s$ and $\varphi_t$ are the labeled proposition of the source and target state respectively.







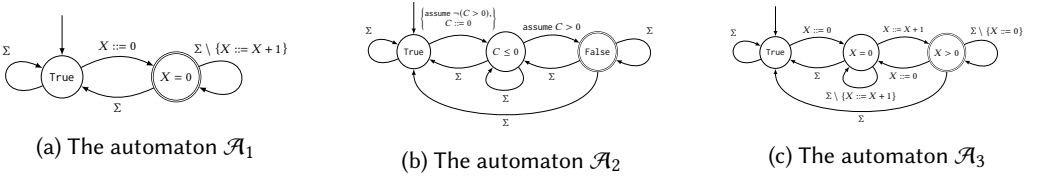
\begin{figure*}[t]
	\centering
	\begin{subfigure}{0.3\textwidth}
		\centering
		\resizebox{\textwidth}{!}{
		\begin{tikzpicture}[xscale=0.75]
    \node[draw, circle, minimum size=20pt] at (0, 0) (N0) {\scriptsize \tTrue};
    \node[draw, circle, accepting, minimum size=20pt] at (3, 0) (N1) {\scriptsize $X=0$};



    \draw[-latex] (N0) edge [bend left=20] node[above] {\scriptsize $X ::= 0$} (N1);
    \draw[-latex] (N1) edge [bend left=20] node[below] {\scriptsize $\Sigma$} (N0);
    \draw[-latex] (N0) edge [out=200, in=160,looseness=12] node[above=0.3] {\scriptsize $\Sigma$} (N0); 
    \draw[-latex] (N1) edge [out=20, in=-20,looseness=12] node[above=0.3] {\scriptsize $\Sigma \setminus \bs{X ::= X + 1}$} (N1);
    \draw[-latex] (0, 1.2) -- (N0);


\end{tikzpicture}
		}
		\caption{The automaton $\mathcal{A}_1$}
		\label{figure: motivation example category 1}
	\end{subfigure}
	\hfill
	\begin{subfigure}{.3\textwidth}
		\resizebox{\textwidth}{!}{
		\begin{tikzpicture}[xscale=0.75]
    \node[draw, circle, minimum size=20pt] at (0, 0) (N0) {\scriptsize \tTrue};
    \node[draw, circle, accepting, minimum size=20pt ] at (6, 0) (N1) {\scriptsize \tFalse};
    \node[draw, circle, minimum size=20pt] at (3, 0) (N2) {\scriptsize $C\le 0$};



    \draw[-latex] (N1) edge [out=-90, in=-90, looseness=0.6] node[below] {\scriptsize $\Sigma$} (N0);
    \draw[-latex] (N2) edge [out=-45, in=-135, looseness=3] node[below] {\scriptsize $\Sigma$} (N2);
    \draw[-latex] (N0) edge [bend left=20] node[above] {\tiny $\bs{\begin{matrix}\assume\ \neg (C > 0), \\ C ::= 0\end{matrix}}$} (N2);
    \draw[-latex] (N2) edge [bend left=20] node[below] {\scriptsize $\Sigma$} (N0);
    \draw[-latex] (N2) edge [bend left=20] node[above] {\scriptsize $\assume\ C > 0$} (N1);
    \draw[-latex] (N1) edge [bend left=20] node[below] {\scriptsize $\Sigma$} (N2);
    \draw[-latex] (N0) edge [out=200, in=160,looseness=12] node[above=0.3] {\scriptsize $\Sigma$} (N0);
    \draw[-latex] (N1) edge [out=20, in=-20,looseness=12] node[above=0.3] {\scriptsize $\Sigma$} (N1);
    \draw[-latex] (0, 1.2) -- (N0);


\end{tikzpicture}
		}
		\caption{The automaton $\mathcal{A}_2$}
		\label{figure: motivation example category 2}
	\end{subfigure}
	\hfill
	\begin{subfigure}{.3\textwidth}
		\resizebox{\textwidth}{!}{
		\begin{tikzpicture}[xscale=0.75]
    \node[draw, circle, minimum size=20pt] at (0, 0) (N0) {\scriptsize \tTrue};
    \node[draw, circle, accepting, minimum size=20pt ] at (6, 0) (N1) {\scriptsize $X > 0$};
    \node[draw, circle, minimum size=20pt] at (3, 0) (N2) {\scriptsize $X = 0$};

%

    \draw[-latex] (N1) edge [out=-90, in=-90, looseness=0.6] node[below] {\scriptsize $\Sigma$} (N0);
    \draw[-latex] (N0) edge [bend left=20] node[above] {\scriptsize $X ::= 0$} (N2);
    \draw[-latex] (N2) edge [out=-45, in=-135, looseness=3] node[below] {\scriptsize $\Sigma \setminus \bs{X ::= X + 1}$} (N2);
    \draw[-latex] (N2) edge [bend left=20] node[below] {\scriptsize $\Sigma$} (N0);
    \draw[-latex] (N2) edge [bend left=20] node[above] {\tiny $X ::= X + 1$} (N1);
    \draw[-latex] (N1) edge [bend left=20] node[below] {\scriptsize $X ::= 0$} (N2);
    \draw[-latex] (N0) edge [out=200, in=160,looseness=12] node[above=0.3] {\scriptsize $\Sigma$} (N0);
    \draw[-latex] (N1) edge [out=20, in=-20,looseness=12] node[above=0.3] {\scriptsize $\Sigma \setminus \bs{X ::= 0}$} (N1);
    \draw[-latex] (0, 1.2) -- (N0);
\end{tikzpicture}
		}
		\caption{The automaton $\mathcal{A}_3$}
		\label{figure: motivation example violation automaton}
	\end{subfigure}
	\caption{a decompostion of $\mathcal{A}$}
\end{figure*}

Essentially, $\mathcal{A}_1$ and $\mathcal{A}_2$ represent a subset of program traces which share common proofs for non-violation. For example, the automaton $\mathcal{A}_1$ represent traces where $X$ is ultimately 0, while $\mathcal{A}_2$ represent all infeasible traces since the accepting state has proposition $\texttt{False}$. $\mathcal{A}_1$ and $\mathcal{A}_2$ thus accepts exactly all the non-violating traces.

The automaton $\mathcal{A}_3$, however, is an over-approximation of the violating traces. To answer the above question, we need to analyse $\mathcal{A}_3$ more thoroughly.
The analysis starts by removing from $\mathcal{A}_3$ redundant traces that are not in $\mathcal{A}$, as well as traces that have been proved to be non-violating by $\mathcal{A}_1$ and $\mathcal{A}_2$ (note that they are not necessarily disjoint). This step is equivalent to the automata-theoretic operation $(\mathcal{A}_3 \setminus (\mathcal{A}_1 \cup \mathcal{A}_2)) \cap \mathcal{A}$.
The trace set is then representable by the regular expression
$\pi = \rho_{\texttt{Pre}}(\rho_{\texttt{Skip}} | \rho_{\texttt{Inc}})^*\rho_{\texttt{Inc}}(\rho_{\texttt{Skip}} | \rho_{\texttt{Inc}})^*$
where $\rho_{\texttt{Pre}}$ is the trace \cbx{$X ::= 0$}\cbx{$\rprobbranch[0]$}\cbx{$\stskip$}, 
	  $\rho_{\texttt{Inc}}$ is \cbx{$\assume\ C > 0$} \cbx{$ \rprobbranch[1]$} \cbx{$ X ::= X + 1$} \cbx{$ C ::= C - 1$}, and 
	  $\rho_{\texttt{Skip}}$ is \cbx{$\assume\ C > 0$} \cbx{$ \lprobbranch[1]$} \cbx{$ \stskip$} \cbx{$ C ::= C - 1$}.
Intuitively,
$\rho_{\texttt{Pre}}$ is the prefix that avoids resetting $C$ to $0$,
$\rho_{\texttt{Inc}}$ is the loop path that increases $X$ and 
$\rho_{\texttt{Skip}}$ is the loop path that keeps $X$ unchanged.
This captures exactly all violating traces of the example.

One could then be able to give a negative answer to the question, by picking 3 traces in $\pi$, displayed in \cref{figure: a counterexample of motivation example}, whose total violation probability is $0.375$ under the shared precondition $C = 2$.
Intuitively, if Bob let $C$ be $2$, he has a winning probability of $0.375$ and a possible playing sequence is: the coin is head-up in the initial toss, and in the following two tosses, the coin is head-up at least once.


\begin{figure}
	\begin{tikzpicture}[yscale=0.7,scale=0.8]
    \node[draw, circle] at (-3, 0) (N0) {};
    \node[draw, circle] at (2, 0) (N1) {};
    \node[draw, circle] at (3, 1) (N2) {};
    \node[draw, circle] at (8, 1) (N4) {};
    \node[draw, circle] at (9.5, 1) (N6) {};
    \node[draw, circle] at (3, -1) (N3) {};
    \node[draw, circle] at (8, -1) (N5) {};
    \node[draw, circle] at (9.5, -1) (N7) {};
    \node[draw, circle] at (10.5, 0) (N8) {};
    \node[draw, accepting, circle] at (14, 0) (N9) {};




    \draw[-latex] (N0) -- (N1)  node[midway, above] {\scriptsize $X::=0; \rprobbranch[0];\stskip;\assume\ C>0$};
    \draw[-latex] (N1) --  node[pos=0.9, left=0.2] {\scriptsize $\lprobbranch[1]$} (N2);
    \draw[-latex] (N1) --  node[pos=0.9, left=0.2] {\scriptsize $\rprobbranch[1]$} (N3);
    \draw[-latex] (N2) -- node[above] {\scriptsize $X::=X+1;C::=C-1;\assume\ C>0$} (N4);
    \draw[-latex] (N3) -- node[below] {\scriptsize $\stskip;C::=C-1;\assume\ C>0$} (N5);
    \draw[-latex] (N4) -- node[above] {\scriptsize $\rprobbranch[1]$} (N6);
    \draw[-latex] (N5) -- node[below] {\scriptsize $\lprobbranch[1]$} (N7);
    \draw[-latex] (N4) -- node[left] {\scriptsize $\lprobbranch[1]$} (N7);
    \draw[-latex] (N6) -- node[pos=0.1, right=0.2] {\scriptsize $\stskip$} (N8);
    \draw[-latex] (N7) -- node[pos=0.1, right=0.2] {\scriptsize $X::=X+1$} (N8);
    \draw[-latex] (N8) -- node[above] {\scriptsize $C::=C-1;\assume\ C\le 0$} (N9);


\end{tikzpicture}
	\caption{A counterexample of \cref{example: motivation}}
	\label{figure: a counterexample of motivation example}
\end{figure}

\subsection{Discussion}
\label{subsec: overview}

\cref{example: motivation} presents some basic ideas that underpins our proof rule in \cref{sec: the proof rule}.
We prove (non-)violation by decomposing all traces of the program into violating and non-violating modules. The non-violating modules contain traces that are guaranteed to be non-violating, while violating modules are over-approximation of violating trace set.
We then analyse the violating modules, either prove the upper bound of violation probability in these modules is no greater than the threshold $\beta$, or find a set of violating traces with violation probability beyond $\beta$.  

Nevertheless, soundness of the proof technique is non-trivial. 
In general, a violating module does not only contain traces that lead to violation,
since the set of violating traces is generally not expressible in regular language.
Besides, a violating module may have a different structure from the original CFA and thus, as is stated in \cref{sec: intro}, probability computation can not rely on the representation of the trace sets.
All these contribute to the hardness of proper analysis of violating modules.

In \cref{example: motivation}, one can easily find the counterexample by enumerating a few violating traces. The above dilemma, however, can be partially seen when we set $\beta := 0.5$.
In this case, we need to show the non-violation of program traces in $\mathcal{A}_3$, that is, the violation probability is no greater than $0.5$.
There are two major problems we need to address: 
1) how to compute the upper bound of violation probability to show the non-violation?
2) why can we declare the non-violation of $\mathcal{A}_3$ guarantees non-violation of $\mathcal{A}$?
The key towards solution lies in a reasonable supportive definition of probability of a set of traces. A more thorough discussion is given below in \cref{subsec: pre-proof rules,subsec: probabilities from trace view}.

\section{Preliminaries}
\label{sec: preliminaries}

In this section, we formally define the problem to address in this paper along with some definitions.

\paragraph{Measure and Probability}
Let $\Omega$ be a non-empty set.
A $\sigma$-algebra $\scriptF$ is a class of special subsets of $\Omega$, such that: 
$\Omega \in \scriptF$ and $\scriptF$ is closed under complement and countable unions.
A set $S$ is called $\scriptF$-measurable if $S \in \scriptF$.
A \emph{measurable space} is given by a pair $(\Omega, \scriptF)$ where $\Omega$ is non-empty and $\scriptF$ is a $\sigma$-algebra on $\Omega$.
A \emph{probability space} is then a triple $(\Omega, \scriptF, \doubleP)$ 
where $(\Omega, \scriptF)$ is a measurable space and 
	  $\doubleP$ is the probability measure that assigns every $\scriptF$-measurable set a number between $0$ and $1$ such that: 
	  $\prob{\Omega} = 1$ and 
	  for any countable sequence of pairwise-disjoint $\scriptF$-measurable sets $S_1, S_2, ...$, $\prob{\biguplus_{i = 1}^\infty S_i} = \sum_{i = 1}^\infty \prob{S_i}$.

\paragraph{Program}
Throughout the paper, we refer to \emph{programs} as classical imperative programs\footnote{For generality and brevity, we do not introduce a specific underlying language since it is irrelevant for discussion.} equipped with fair binary probabilistic choice $\oplus$ and binary non-deterministic operator $\circledast$ \emph{on statements}.
Our framework is compatible with the more general version of probabilistic choices $\oplus_p$.
Notwithstanding, it is well-known that fair binary probabilistic choice is sufficient if the underlying programming language is Turing complete~\cite{prob_turing_complete}.
We adopt the standard definition of a \emph{program state} to be a valuation of program variables.

\paragraph{Data Randomness vs.~Control-Flow Randomness}
This naming for the two existing styles of extending programs with probabilities in the literature is proposed by Wang et al.~\cite{pmaf}.
Data randomness style programs model probabilities via \emph{sampling} expressions, which implicitly introduces probabilities by program semantics,
while control-flow randomness style programs model probabilities by statements or the control-flow, which explicitly introduces probabilities by program syntax.
For example,
the statement in data randomness style \cbx{$X \sim Flip(0.5)$} is equivalent to \cbx{$X ::= 0 \oplus X ::= 1$} in control-flow randomness style.

\paragraph{Specification}
A specification for a program is a pair of propositions $(\varphi_e, \varphi_f)$ over program variables, referred to as pre- and post-condition
respectively. 
We use $\Phi$ to denote the pair.

\paragraph{PCFA}
The main focus of this paper is with \emph{probabilistic control-flow
automaton}, which is abbreviated as PCFA or just CFA below. It is essentially
a non-probabilistic CFA with probabilistic and non-deterministic labels. 

Each non-random statement of the program is regarded as a label.
We assume each of the probabilistic or non-deterministic operators in the program has a
unique identifier, say $i$. 
We use 
$\lprobbranch$ and
$\rprobbranch$
to label 
the two branches of the probabilistic operator $i$, and
$*_i$ to label the branches of the non-deterministic operator $\circledast$.
Thus, the whole label set of the program, denoted by $\Sigma$, is given by standard labels of imperative programs, non-deterministic labels $*_i$ and probabilistic labels $\lprobbranch$ and $\rprobbranch$.

\begin{definition}[Probabilistic Control-Flow Automaton]
	A probabilistic control-flow automaton\footnote{see the appendix for discussion on models} is a tuple $(L, \Sigma, \delta,
    \ell_0, \ell_e)$, where $L$ is a \emph{finite} set of locations (also
    called nodes); $\Sigma$ is a \emph{finite} set of labels; $\delta \subseteq L \times \Sigma \times L$ is the transition relation, specially we may write $\ell \xrightarrow{\sigma} \ell'$ to denote $(\ell, \sigma, \ell') \in \delta$; $\ell_0, \ell_e \in L$ are the initial and ending location respectively.
\end{definition} 

We call a word of labels $\Sigma^*$ to be a \emph{trace}, and the accepting
words of a CFA \emph{accepting trace}s of that CFA. 
The semantics, or the interpretation, of a trace
is a function $\interpret{-}$ that maps a trace to a function of program states. 
For a given finite trace $\tau$ of a CFA whose label set is denoted by $\Sigma$, its interpretation $\interpret{\tau}$ is inductively defined by: 
$$ \interpret{\sigma :: \tau}(s) := \interpret{\tau}(\interpret{\sigma}(s)) \qquad \interpret{\varepsilon}(s) := s $$
for each $\sigma \in \Sigma$, we adopt the standard interpretations for non-random labels and 
let nondeterministic and probabilistic labels have the same semantics as $\stskip$.

The \emph{weight}~
\footnote{probability of a trace should also take into account the semantics. Hence we use ``weights'', a purely syntactic notion.} 
of a trace $\tau$ denoted by $wt(\tau)$ is $2^{-n}$,
given that each probabilistic label means a fair binary probabilistic choice, 
and $n$ is the number of probabilistic labels within $\tau$.

The conversion from program to PCFA is standard for the non-random part. For probabilistic choices, we \emph{split} the branches into two, tagged with $\texttt{L}$ and $\texttt{R}$ respectively, and tag each statement $s_1 \oplus s_2$ with a unique identifier. Likewise, we split and tag non-deterministic branches with unique identifiers for each single branch. 
An example of such translation can be seen in \cref{figure: motivation example cfa} above.

In general, CFA definition does not require determinism. However, it is the case for the converted ones, as suggested below:

\begin{fact}
	\label{theorem: P is DFA}
	For any program $P$, the resulting CFA from conversion above is a deterministic finite-state automata (DFA).
\end{fact}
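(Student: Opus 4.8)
The plan is to argue by structural induction on the program $P$, following the standard program-to-CFA translation, and to show that at each stage the construction preserves a single invariant: from the current location there is at most one outgoing transition carrying any given label. Since the converted CFA has finitely many locations and finitely many labels by construction, this invariant is exactly what is needed to conclude that the automaton is a DFA (together with the obvious fact that it has a single designated start location $\ell_0$).

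First I would make the translation explicit enough to reason about. For an atomic non-random statement the CFA is two locations joined by a single transition, which is trivially deterministic. For sequential composition $P_1; P_2$ one identifies the ending location of $P_1$ with the initial location of $P_2$; the only new location where a clash could arise is this junction, and determinism there follows because the translation of $P_2$ contributes exactly the outgoing edges of its own initial location while the translation of $P_1$ contributes none at the identified ending location (the ending location of a translated block has no outgoing edges before composition). For conditionals and loops, the branching is done via $\texttt{assume}$ labels: the two outgoing edges of the test location carry $\assume\ b$ and $\assume\ \neg b$, which are \emph{distinct} labels, so determinism is maintained. The crucial case is the probabilistic choice $s_1 \oplus s_2$ with identifier $i$: here we split into two edges labelled $\lprobbranch$ and $\rprobbranch$. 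Because the identifier $i$ is globally unique and the two branch tags $\texttt{L}$, $\texttt{R}$ are distinct, these two labels differ from each other and from every other label in $\Sigma$, so again the test location has at most one edge per label. Non-deterministic choices $\circledast$ with tags $*_i$ are handled identically.

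The main obstacle I expect is not any single inductive case but bookkeeping around the uniqueness of identifiers and the claim that a translated block always has a ``fresh'' ending location with no outgoing transitions — i.e. making precise the interface contract of the translation so that the induction hypothesis can be stated and applied cleanly. Concretely, I would carry as part of the induction hypothesis: (i) the translated automaton is deterministic; (ii) its initial location has no incoming transitions and its ending location has no outgoing transitions; (iii) the sets of probabilistic/non-deterministic identifiers occurring in translations of syntactically distinct sub-statements are disjoint. Property (iii) is what rules out a spurious label collision between, say, two different $\oplus$ statements, and is the only place where the "unique identifier" assumption from the PCFA setup is actually used; without it the claim is false. Once these invariants are threaded through all the constructors, each case reduces to a one-line check, and \cref{theorem: P is DFA} follows by taking the whole program as the top-level sub-statement.
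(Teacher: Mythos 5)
The paper states this as a \emph{Fact} and offers no proof at all, so there is nothing to compare against line by line; your structural induction is the natural way to discharge it, and the argument is essentially correct. The decisive observations are exactly the ones you isolate: branching constructs always emit syntactically distinct labels at the branch location ($\assume\ b$ versus $\assume\ \neg b$ for conditionals and loops, $\lprobbranch$ versus $\rprobbranch$ for $\oplus$, and per-branch-unique $*_i$ tags for $\circledast$, which is how the paper's conversion paragraph says non-deterministic branches are tagged), and sequential composition only ever glues a location with no outgoing edges onto one whose outgoing edges are already label-disjoint. One small caveat: the first half of your invariant (ii), that the initial location of a translated block has no \emph{incoming} transitions, is not actually preserved by the while-loop case --- the back edge from the body's exit re-enters the loop head, which is that block's initial location. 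Fortunately this half of (ii) is never used anywhere in your argument (incoming edges are irrelevant to determinism, and the junction argument for $P_1;P_2$ only needs that $P_1$'s ending location has no outgoing edges), so you should simply drop it from the induction hypothesis rather than try to repair it. With that adjustment, and keeping invariant (iii) on identifier disjointness to rule out label collisions between distinct $\oplus$ occurrences, the induction closes in each case as you describe.
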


\paragraph{CFMDP, CFMC and Strategy}
For readers familiar with standard probabilistic models, it seems trivial to observe that a CFA like the one in \cref{figure: motivation example cfa} can be easily seen as a \emph{Markov decision process}. In such models, one could view edges with labels except for probabilistic ones as actions attached with a \emph{Dirac} distribution and probabilistic choices as actions named by the identifier with a fair Bernoulli distribution.
To make full use of this observation, we present the concepts of \emph{control-flow Markov chain (CFMC)} and \emph{control-flow Markov decision process (CFMDP)} and also \emph{strategy}, \emph{action} thereof. 
These notions hugely benefit exposition, which allows us to discuss problems within a single entity where techniques of both sides are easily applied.
Further, they serve as a proper extension to standard probabilistic models, which furnishes standard models with control flow as semantics.

A CFMDP is simply a \emph{deterministic} PCFA with no edges out of the ending location.
Note that we relax the restriction that $\lprobbranch$ and $\rprobbranch$ for a specific $i$ must appear in pairs.
Intuitively, a node with only one branch means getting to a dead node that could never reach the accepting node.
Furthermore, any CFA converted from a program is also a CFMDP.
We call identifiers for probabilistic distributions and non-probabilistic labels \emph{actions} of CFMDP.
Then, a CFMC is just a CFMDP that there is at most one action associated with every single node.
Next, the concept of \emph{strategy} for CFMDP mimics exactly the standard one. 
Specially, in this work, we only refer to the \emph{finite memory} ones.
In the following, when we say strategy, we refer to \emph{finite memory strategy}.

We use $\scriptS(\scriptA)$ to denote the set of all finite memory strategies of a CFMDP $\scriptA$.
Furthermore, we denote the CFMC induced by applying a strategy $\psi$ to a CFMDP $\scriptA$ as $\scriptA^\psi$.
 
 Intuitively, a finite memory strategy displays program states behind control flows.
 As to be mentioned below, we care only about terminating traces, so infinite memory strategies are pointless.

\paragraph{Problem}
In this work, the central problem we aim to solve is that: given a program $P$, a specification $(\varphi_e, \varphi_f)$ and $\beta$, whether the probability of $P$ starting at program states satisfying $\varphi_e$ end up with a program state \emph{not} satisfying $\varphi_f$, \emph{conditioning on the termination} of $P$ is $\le \beta$.

Let the indicator function be written as $[-]$,
formally, we define the probability of violation for a program with respect to
a specification $(\varphi_e, \varphi_f)$, and more generally for a CFMDP $\scriptA$ to be:
\begin{align}
	\prob[(\varphi_e, \varphi_f)]{\scriptA} := 
	\max_{s \models \varphi_e} \sup_{\psi \in \scriptS(\scriptA)} \sum_{\tau \in \scriptL(\scriptA^\psi)} wt(\tau) \cdot [\interpret{\tau}(s) \models \neg \varphi_f]
	\label{eq: prob for CFMDP}
\end{align}

So, the problem, denoted by $\vdash_\beta \bs{\varphi_e}\ P\ \bs{\varphi_f}$
is then just to ask whether the inequation $ \prob[(\varphi_e, \varphi_f)]{P} \le \beta $ holds. \footnote{We do not explicitly distinguish the program and the translated CFA thereof in the work for the ease of exposition.}

\paragraph{Counterexample}
In the non-random case, a counterexample of a program $P$ against a
specification $(\varphi_e, \varphi_f)$ is just an executable trace $\tau$ of $P$ such that there exists an initial program state $s \models \varphi_e$ that $\interpret{\tau}(s) \models \neg \varphi_f$. 
It is, however, not the case in the probabilistic context as we may need several traces. As discussed in \cite{pcegar}, a counterexample in probabilistic context should be an actual \emph{probabilistic} execution of the program. It means that a counterexample should consist of a certificate predicate that implies the pre-condition and a possible execution strategy of this program. 
In our context, we take the \emph{effect} of strategy application and formally present it to be: 
a counterexample for specification $(\varphi_e, \varphi_f)$ with threshold $\beta$ against a program $P$, is defined to be a pair $(S, \varphi_{err})$ where $S \subseteq \scriptL(P)$ is a set of traces within which any two traces have the same prefix until a probabilistic label and $\varphi_{err}$ is a proposition called \emph{error pre-condition} that is not effectively $\texttt{False}$, with $\varphi_{err} \to \varphi_e$ and $ \forall s \models \varphi_{err}.\forall \tau \in S.\interpret{\tau}(s) \models \neg \varphi_f $.

\paragraph{Notations}
For the convenience of exposition, throughout the paper, unless specially stated, we fix a program $P$ and a specification $\Phi := (\varphi_e, \varphi_f)$. If it is clear from the context, we write just $\prob{-}$ to denote $\prob[\Phi]{-}$.

\section{\textsc{ProbTA}: the Proof Rule}
\label{sec: the proof rule}


In this section, we first present an abstract rule from a pure probabilistic and trace-based perspective,
concretise it and illustrate the major obstacle towards actual application in \cref{subsec: pre-proof rules}.
We then give a proper solution to this obstacle in \cref{subsec: probabilities from trace view}.
Finally, we present the applicable proof rule \textsc{ProbTA} and show its soundness \& completeness in \cref{subsec: the proof rule}.

\subsection{General Rules}
\label{subsec: pre-proof rules}
\paragraph{The Abstract Proof Rule}
The abstract proof rule for combining probabilities and trace abstraction is a \emph{tautology} over \emph{any} possible probabilistic measures on a given measurable space over a label set $\Sigma$.
It serves as a probabilistic theoretic foundation of rules below.

More specifically, let $(\Theta_\Sigma, \scriptF)$ be a valid measurable space on traces, 
where $\Theta_\Sigma := \Sigma^*$ is the set of all finite traces over $\Sigma$, 
for any probability measure $\mu_\Phi$ defined to be called violation probabilities of a set of traces,
we have: 
for any two $\scriptF$-measurable sets $\Theta$ and $\Theta'$,
if $\Theta \subseteq \Theta'$, then
$\mu_\Phi(\Theta) \le \mu_\Phi(\Theta')$.

We then denote $\mu_{(\varphi_e, \varphi_f)}(\Theta) \le \beta$ to be:
$$ \vdash_\beta \bs{\varphi_e}\ \Theta\ \bs{\varphi_f} $$
So, this mathematically trivial fact then leads to an abstract proof rule that:
\begin{equation}
	\frac{\Theta \subseteq \Theta' \qquad \vdash_\beta \bs{\varphi_e}\ \Theta'\ \bs{\varphi_f}}{\vdash_\beta \bs{\varphi_e}\ \Theta\ \bs{\varphi_f}}
	\tag{\textsc{AbsProbTA}}
	\label{eq: the abstract proof rule}
\end{equation}

Towards the application of this abstract rule, there is one more obstacle we need to overcome -- 
a proper definition of $\mu_\Phi$ to bridge between the standard probability definition and that of a set of traces.
This can be seen more clearly by a concretisation to the abstract rule.

\paragraph{An Applied Proof Rule}
Let the $\sigma$-algebra $\scriptF$ over $\Theta_\Sigma$ be \emph{regular languages}, as the concrete hosts to discuss in our case is PCFA.
The validity comes from standard facts on regular languages.
Then, as described in \cref{sec: motivation example and overview}, we create two automata, 
one keeps non-violating trace, called the \emph{certified module}, denoted by $Q$, 
one keeps an over-approximation of violating traces called the \emph{violating module}, denoted by $A$.
So we have the following applied proof rule for a program $P$:
\begin{equation}
	\frac{\begin{matrix}
	\scriptL(P) \subseteq \scriptL(Q) \cup \scriptL(A) \\ \vdash \bs{\varphi_e}\ Q\ \bs{\varphi_f} \quad \vdash_\beta \bs{\varphi_e}\ \scriptL(A)\ \bs{\varphi_f}
	\end{matrix}}{\vdash_\beta \bs{\varphi_e}\ \scriptL(P)\ \bs{\varphi_f}}	\tag{\textsc{PreProbTA}}
	\label{eq: the pre-proof rule}
\end{equation}
where $\vdash \bs{\varphi_e}\ Q\ \bs{\varphi_f}$ is
an abbreviation of 
that:
for all $\tau \in \scriptL(Q)$, the Hoare triple $\bs{\varphi_e}\ \tau\ \bs{\varphi_f}$ holds.

To make a truly applicable proof rule, there is a final obstacle to address:
from $\vdash_\beta \bs{\varphi_e}\ \scriptL(P)\ \bs{\varphi_f}$ to $\vdash_\beta \bs{\varphi_e}\ P\ \bs{\varphi_f}$.
This requires, as mentioned above, a proper definition of $\mu_\Phi$.

\subsection{Probabilities from Trace View}
\label{subsec: probabilities from trace view}


For such a supportive definition, it suffices to aim simply for enlargement -- $\mu_\Phi(\scriptL(P)) \ge \prob[\Phi]{P}$.
We, however, aim for more -- strict equality, which guarantees the completeness of our proof rule.

Our solution for this definition comes from mimicking how probability is computed in \cref{eq: prob for CFMDP}. 
In the formula, one computes the \emph{least upper bound} of violation probabilities of \emph{all possible CFMC} within the CFMDP. 
Following this observation, the basic idea behind our solution is that: 
we try \emph{merging} traces into CFMC and use the least upper bound of violation probabilities among all these CFMC. 
To write it more concretely, rather than the general probability measure $\mu_\Phi$, 
this probability definition is denoted below by $\prob[\Phi][\scriptL]{-}$ instead.

Additionally, to confirm the meaningfulness of this definition, one may need to observe the trivial fact:

\begin{fact}
	\label{theorem: prob of CFMC independent of shape}
	For any two CFMCs $\scriptM$ and $\scriptM'$, if $\scriptL(\scriptM) = \scriptL(\scriptM')$, then $\prob[\Phi]{\scriptM} = \prob[\Phi]{\scriptM'}$.
\end{fact}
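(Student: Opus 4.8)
The plan is to unfold the definition in \eqref{eq: prob for CFMDP} specialised to a CFMC. Since a CFMC has at most one action per node, the only "strategy" choice available is the trivial one, so $\scriptS(\scriptM)$ collapses and the $\sup_\psi$ disappears: we get $\prob[\Phi]{\scriptM} = \max_{s \models \varphi_e} \sum_{\tau \in \scriptL(\scriptM)} wt(\tau)\cdot[\interpret{\tau}(s)\models\neg\varphi_f]$. The right-hand side now depends on $\scriptM$ only through (i) the accepted language $\scriptL(\scriptM)$ and (ii) the weight function $wt(\cdot)$ on traces. The key observation is that $wt(\tau) = 2^{-n(\tau)}$, where $n(\tau)$ counts the probabilistic labels appearing in the trace $\tau$ itself — this is a purely syntactic quantity attached to the word, as emphasised in the definition of weight, and is entirely independent of which automaton accepts $\tau$. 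Likewise $\interpret{\tau}$ depends only on the word $\tau$, not on the recognising structure. Hence both the summand $wt(\tau)\cdot[\interpret{\tau}(s)\models\neg\varphi_f]$ and the index set $\scriptL(\scriptM)$ are determined by $\scriptL(\scriptM)$ alone, so if $\scriptL(\scriptM) = \scriptL(\scriptM')$ the two sums agree term-by-term, and taking $\max$ over the same set $\{s \models \varphi_e\}$ yields $\prob[\Phi]{\scriptM} = \prob[\Phi]{\scriptM'}$.

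The main obstacle — really the only point requiring care — is justifying the collapse of the strategy quantifier for a CFMC and confirming that the sum is well-defined. For the first: a CFMC by definition has at most one action per node, so a strategy has nothing to resolve; formally $\scriptA^\psi = \scriptM$ for the unique (empty-choice) $\psi$, giving $\scriptL(\scriptM^\psi) = \scriptL(\scriptM)$, so the $\sup$ is over a singleton (or, where a node has a genuine probabilistic branching that is \emph{not} an action, the branching is already baked into $\scriptL(\scriptM)$ and $wt$, not into $\psi$). For the second: the series $\sum_{\tau \in \scriptL(\scriptM)} wt(\tau)\cdot[\cdots]$ is a sum of non-negative terms, hence has a well-defined value in $[0,\infty]$ regardless of enumeration order, so no subtlety about rearrangement arises; one should remark (as the surrounding text does, via conditioning on termination) that it is in fact bounded by $1$, but even that is not needed for the equality.

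I would therefore present the proof as: expand \eqref{eq: prob for CFMDP}, note the strategy set of a CFMC is trivial, observe $wt$ and $\interpret{-}$ are functions of the trace word only, and conclude the two expressions are literally the same sum. This is short enough that no lemma machinery is needed; I would keep it to a few lines and flag it explicitly as the routine sanity check that makes $\prob[\Phi][\scriptL]{-}$ — defined via merging traces into CFMCs and taking a least upper bound — a meaningful notion in the first place.
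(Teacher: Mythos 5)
Your proposal is correct and is essentially the only natural argument: the paper states this as a ``trivial fact'' and gives no proof at all, but the intended reasoning is exactly your unfolding of \eqref{eq: prob for CFMDP} with the observation that $wt(\tau)$ and $\interpret{\tau}$ are functions of the word alone. The one point worth a clause in the write-up is that strategies are \emph{partial} finite-memory machines, so $\scriptS(\scriptM)$ is not literally a singleton; rather, every $\psi$ gives $\scriptL(\scriptM^\psi)\subseteq\scriptL(\scriptM)$ and non-negativity of the summands makes the supremum equal the sum over the full language, which is the refinement of your ``collapse'' step.
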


Based on this idea, we first need to define the notion of \emph{mergeable}.
A set of paths $\Theta$ is called \emph{mergeable} iff there exists a CFMC $\scriptM$ such that $\Theta = \scriptL(\scriptM)$.
We call all mergeable subsets of a given path set $\Theta$ to be $\mergeable(\Theta)$.
	
So, given a set of paths $\Theta$ and specification $(\varphi_e, \varphi_f)$, we define the violation probability of this set to be:
\begin{align}
	\label{eq: prob of trace set}
	\prob[(\varphi_e, \varphi_f)][\scriptL]{\Theta} := \max_{s \models \varphi_e} \sup_{\Pi \in \mergeable(\Theta)} \sum_{\tau \in \Pi} wt(\tau) \cdot [\interpret{\tau}(s) \models \neg \varphi_f]
\end{align}
	
So, by definition, we have for any CFMDP $\scriptA$:
\begin{align}
	\label{eq: true def <= trace def}
	\prob[\Phi]{\scriptA} \le \prob[\Phi][\scriptL]{\scriptL(\scriptA)}
\end{align}

To reach the equality, we naturally need another inequation from another direction.
To show this, we will technically require that for any node of the given CFMDP, if $\probbranch_{(i, \lbranch)}$
and $\probbranch_{(i, \rbranch)}$ exist at the same time for some $i$, then they must point to different locations.
In effect, this requirement is not an actual restriction.
As for any CFMDP, we can perform \emph{normalisation} to reach this effect while keeping the accepting traces unchanged.
Concrete method for normalisation is given in the Appendix.

As this restriction will not affect expressivity, to ease exposition, we will assume all CFMDP to discuss below to be normalised and satisfy the restriction.
With this restriction, we have the following theorem:
	
\begin{theorem}
	\label{theorem: subset CFMC and strategy isomorphism}
	For any (normalised) CFMDP $\scriptA$, and any CFMC $\scriptM$, if $\scriptL(\scriptM) \subseteq \scriptL(\scriptA)$, then there exists a strategy $\psi$ such that $\scriptL(\scriptA^\psi) = \scriptL(\scriptM)$.
\end{theorem}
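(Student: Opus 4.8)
The plan is to construct the strategy $\psi$ directly from the structure of the CFMC $\scriptM$, exploiting the fact that both $\scriptM$ and $\scriptA$ are deterministic. Since $\scriptM$ is a CFMC, at every node of $\scriptM$ there is at most one action; moreover, because $\scriptL(\scriptM)\subseteq\scriptL(\scriptA)$ and $\scriptA$ is a deterministic CFA, every accepting trace of $\scriptM$ is read along a uniquely determined path in $\scriptA$. The idea is to let the finite-memory strategy $\psi$ remember ``which node of $\scriptM$ we are currently simulating'': the memory state space will be (a subset of) the locations of $\scriptM$, and the memory-update function will track the $\scriptM$-transition corresponding to the label just taken. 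At a location $\ell$ of $\scriptA$ with memory $m$ (a location of $\scriptM$), the strategy picks the unique action available at $m$ in $\scriptM$ (if $m$ has no action, the run is in a dead end and $\psi$ picks anything, since no accepting continuation exists).

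First I would set up this product-style construction precisely: define $\psi$ as a finite-memory strategy whose memory is $L_\scriptM$ (locations of $\scriptM$), initialised at $\ell_0^\scriptM$, with the update rule that on reading label $\sigma$ from memory state $m$ we move to the unique $m'$ with $m\xrightarrow{\sigma}m'$ in $\scriptM$ when such an edge exists (and to a fixed sink memory state otherwise). Because $\scriptM$ is deterministic, this update is well-defined. Then I would argue by induction on trace length that a trace $\tau$ is accepted by $\scriptA^\psi$ iff it is accepted by $\scriptM$: the forward direction uses that $\psi$ only ever offers the action dictated by $\scriptM$, so any $\scriptA^\psi$-run projects to a legal $\scriptM$-run reaching $\ell_e$; the reverse direction uses that any $\scriptM$-accepting trace is, by $\scriptL(\scriptM)\subseteq\scriptL(\scriptA)$, also an $\scriptA$-trace, and that along it the memory of $\psi$ faithfully tracks the $\scriptM$-path, so $\psi$ always enables exactly the needed transition. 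Here is where the normalisation hypothesis is essential: when $\scriptM$ takes a probabilistic action $i$ at node $m$, both branches $\probbranch_{(i,\lbranch)}$ and $\probbranch_{(i,\rbranch)}$ of that action must be followed in $\scriptA^\psi$, and normalisation guarantees that in $\scriptA$ these two labels lead to distinct locations, so the memory update (which reads the actual label $\lbranch$ or $\rbranch$) can disambiguate which child of $m$ we land in and the simulation stays synchronised. Without normalisation, $\scriptA$ could collapse the two branches to one location while $\scriptM$ keeps them separate, and no strategy on $\scriptA$ could recover the distinction.

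I expect the main obstacle to be the bookkeeping around choosing an action in the CFMDP versus reading a label: a strategy in a CFMDP selects an \emph{action} (a distribution over successors / a deterministic move), and for a probabilistic action the environment then resolves $\lbranch$ vs.\ $\rbranch$, whereas $\scriptM$'s structure is phrased in terms of the concrete labelled edges. The delicate point is to check that the labels actually appearing on outgoing edges of a node in $\scriptM$ and in $\scriptA$ line up correctly under this action/label correspondence — in particular that $\scriptM$ never uses only one of the two probabilistic branches in a way that $\scriptA^\psi$ cannot reproduce (it can: a CFMC is itself allowed to be non-normalised, and a single-branch probabilistic node of $\scriptM$ simply means one branch is a dead node in $\scriptA^\psi$, contributing no accepting traces, which is consistent with $\scriptL(\scriptM)\subseteq\scriptL(\scriptA)$). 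Once this correspondence is nailed down, the equality $\scriptL(\scriptA^\psi)=\scriptL(\scriptM)$ follows from the two inductive inclusions, and the finiteness of the memory is immediate since $L_\scriptM$ is finite.
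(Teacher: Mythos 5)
Your high-level plan --- simulate $\scriptM$ in the strategy's finite memory and use normalisation to keep the simulation synchronised across the two probabilistic branches --- is the same idea the paper's proof uses. But there is a concrete gap in exactly the step you flag as ``the main obstacle'' and then leave open. In the paper's formalism a finite-memory strategy is a partial function $\delta_\psi : L \times Q \rightharpoonup A \times Q$: given the current $\scriptA$-location and memory state it outputs an action \emph{and a single successor memory state}, and in the induced CFMC both branches $\probbranch_{(i,\lbranch)}$ and $\probbranch_{(i,\rbranch)}$ of a probabilistic action enter that same memory state $q'$. So your update rule ``on reading label $\sigma$ from memory $m$ move to the unique $m'$ with $m \xrightarrow{\sigma} m'$ in $\scriptM$'' is not implementable as stated: when $m$ has a probabilistic action whose two $\scriptM$-successors $\ell_2$ (left) and $\ell_3$ (right) differ, the strategy must commit to one memory successor \emph{before} the coin is resolved, and a memory space equal to $L_\scriptM$ contains no state that can stand for ``either $\ell_2$ or $\ell_3$, to be determined.'' The induction you sketch therefore cannot be set up, because the memory trajectory along a probabilistic branch is undefined.

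The paper closes this gap with a one-step-delay device: the memory contains, besides $L_\scriptM$, additional ``dual'' states $\ell_{(T,\ell_2,\ell_3)}$ recording both candidate $\scriptM$-successors together with the $\scriptA$-location $T$ targeted by the left branch. The strategy moves into such a dual state when it plays action $i$, and only at the \emph{next} step --- where $\delta_\psi$ does get to observe the actual current $\scriptA$-location --- does it compare that location with $T$ to decide whether to continue as $\ell_2$ or as $\ell_3$. Normalisation is precisely what makes this comparison decisive, which matches your (correct) explanation of why that hypothesis is needed; the point you miss is that the disambiguating information arrives one step late and via the observed location, not via the label. Your proof is recoverable after enlarging the memory and rephrasing the update in this delayed form, but as written the central construction does not type-check against the definition of strategy.
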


The key to a constructive proof relies on the normalisation above.
The idea is that we use a series of \emph{dual} states in the strategy to delay making actual decisions, and
final decision is \emph{effectively} made based also on the actual next location of $\scriptA$ chosen.
A full proof is attached in the Appendix.

With this theorem, we then obtain:
\begin{align}
	\label{eq: true def >= trace def}
	\prob[(\varphi_e, \varphi_f)]{\scriptA} \ge \prob[(\varphi_e, \varphi_f)][\scriptL]{\scriptL(\scriptA)}
\end{align}
	
By \cref{eq: true def <= trace def,eq: true def >= trace def}, we reach the desirable effect:
	
\begin{theorem}
	\label{theorem: coincidence of two probabilities for CFMDP}
	For any (normalised) CFMDP $\scriptA$, its violation probability is the same as the violation probability of its traces. That is:
	\begin{align}
		\prob[(\varphi_e, \varphi_f)]{\scriptA} = \prob[(\varphi_e, \varphi_f)][\scriptL]{\scriptL(\scriptA)}
	\end{align}
\end{theorem}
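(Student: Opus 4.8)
The plan is to prove \cref{theorem: coincidence of two probabilities for CFMDP} by establishing the two inequalities $\prob[\Phi]{\scriptA} \le \prob[\Phi][\scriptL]{\scriptL(\scriptA)}$ and $\prob[\Phi]{\scriptA} \ge \prob[\Phi][\scriptL]{\scriptL(\scriptA)}$ separately, and then combining them by antisymmetry of $\le$. The first direction is already recorded as \cref{eq: true def <= trace def}: it is essentially immediate from the definitions, because for any strategy $\psi \in \scriptS(\scriptA)$ the induced CFMC $\scriptA^\psi$ has $\scriptL(\scriptA^\psi) \subseteq \scriptL(\scriptA)$, hence $\scriptL(\scriptA^\psi) \in \mergeable(\scriptL(\scriptA))$, so every term in the supremum defining $\prob[\Phi]{\scriptA}$ (over strategies) appears as a term in the supremum defining $\prob[\Phi][\scriptL]{\scriptL(\scriptA)}$ (over mergeable subsets). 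Taking $\sup$ and then $\max$ over $s \models \varphi_e$ on both sides preserves the inequality. I would state this explicitly but briefly, since the real content is the other direction.

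For the reverse inequality $\prob[\Phi]{\scriptA} \ge \prob[\Phi][\scriptL]{\scriptL(\scriptA)}$ (i.e.\ \cref{eq: true def >= trace def}), the key is to show that every mergeable subset $\Pi \in \mergeable(\scriptL(\scriptA))$ can be realised as the language of an induced CFMC. By definition of mergeability, $\Pi = \scriptL(\scriptM)$ for some CFMC $\scriptM$, and since $\Pi \subseteq \scriptL(\scriptA)$ we have $\scriptL(\scriptM) \subseteq \scriptL(\scriptA)$; now \cref{theorem: subset CFMC and strategy isomorphism} supplies a strategy $\psi \in \scriptS(\scriptA)$ with $\scriptL(\scriptA^\psi) = \scriptL(\scriptM) = \Pi$. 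By \cref{theorem: prob of CFMC independent of shape}, the quantity $\sum_{\tau \in \Pi} wt(\tau)\cdot[\interpret{\tau}(s)\models\neg\varphi_f]$ — which depends only on the language $\Pi$, the fixed $s$, and the weights/semantics — equals the corresponding sum over $\scriptL(\scriptA^\psi)$. Hence each term in the supremum over $\mergeable(\scriptL(\scriptA))$ is matched by a term in the supremum over $\scriptS(\scriptA)$, giving $\sup_{\Pi} (\cdots) \le \sup_{\psi}(\cdots)$ for each fixed $s$, and then $\max_{s\models\varphi_e}$ on both sides yields \cref{eq: true def >= trace def}. Finally, chaining \cref{eq: true def <= trace def} and \cref{eq: true def >= trace def} gives the claimed equality.

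The main obstacle is not in this theorem itself but in the lemma it leans on, \cref{theorem: subset CFMC and strategy isomorphism}, which the excerpt defers to the appendix; within the present argument the only subtlety to be careful about is matching the summands correctly, namely checking that $wt(\tau)$ and $\interpret{\tau}(s)$ are purely syntactic/semantic functions of the trace $\tau$ and therefore agree when two CFMCs (or a CFMC and an induced CFMC) share the same accepting language — this is exactly what \cref{theorem: prob of CFMC independent of shape} packages, so I would invoke it rather than re-derive it. One should also note that the suprema are genuinely over possibly-infinite families, but since we are only comparing suprema term-by-term (every term on one side is present on the other), no continuity or attainment argument is needed; the outer $\max_{s\models\varphi_e}$ is likewise monotone and commutes with the inequality. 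I would therefore keep the proof short: a sentence recalling \cref{eq: true def <= trace def}, a paragraph deriving \cref{eq: true def >= trace def} via \cref{theorem: subset CFMC and strategy isomorphism} and \cref{theorem: prob of CFMC independent of shape}, and a closing line combining the two.
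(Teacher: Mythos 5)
Your proposal is correct and follows essentially the same route as the paper: the forward inequality (\cref{eq: true def <= trace def}) is read off from the definitions since each induced CFMC's language is a mergeable subset of $\scriptL(\scriptA)$, and the reverse inequality (\cref{eq: true def >= trace def}) is obtained by turning an arbitrary mergeable subset into an induced CFMC via \cref{theorem: subset CFMC and strategy isomorphism}, with \cref{theorem: prob of CFMC independent of shape} ensuring the summands depend only on the language. The paper's own proof is just the conjunction of these two displayed inequalities, so nothing further is needed.
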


In the following,
if it's clear from context, we ignore the subscript and superscript of the probabilistic operator -- that we abbreviate $\prob[\varphi]{-}$ or $\prob[\varphi][\scriptL]{-}$ to be just $\prob{-}$.

\begin{remark}
	This desirable result of equality owes partially to our definition of PCFA.
	In this definition, traces keep partial information on the original structure.
	This arrangement helps us re-construct a canonical form of models.
\end{remark}

\subsection{\textsc{ProbTA}: the Proof Rule}
\label{subsec: the proof rule}

Now, we are ready to present our proof rule and its properties with proof formally.

The proof rule is presented by that:
given a program (or, equivalently, its PCFA) $P$, a PCFA $Q$ called the certified module, and another PCFA $A$ called the violating module, we have:

\begin{equation}
	\frac{\begin{matrix}
	\scriptL(P) \subseteq \scriptL(Q) \cup \scriptL(A) \\ \vdash \bs{\varphi_e}\ Q\ \bs{\varphi_f} \quad \vdash_\beta \bs{\varphi_e}\ \scriptL(A)\ \bs{\varphi_f}
	\end{matrix}}{\vdash_\beta \bs{\varphi_e}\ P\ \bs{\varphi_f}}	\tag{\textsc{ProbTA}}
	\label{eq: the proof rule}
\end{equation}

Intuitively, this is to say, if one wants to prove the Hoare-triple $\bs{\varphi_e}\ P\ \bs{\varphi_f}$ holds with violation tolerant threshold $\beta$, 
it suffices to find two automatons $Q$ and $A$ such that 
traces in certified module $Q$ all satisfy the triple, and the set of accepting traces of violating module $A$ satisfies the triple up to violation probability $\le \beta$. 
Furthermore, think of the two modules $Q$ and $A$ as proper programs. This implies that one can show a probabilistic program's satisfiability via some ``near" programs that over-approximate the original program in execution traces.

The proof rule is sound \& complete where the soundness is given by:
\begin{theorem}
	\label{theorem: soundness of the proof rule}
	For any program $P$ and given specification $(\varphi_e, \varphi_f)$ with threshold $\beta$, if there exists such control-flow automatons $Q$ and $A$ that: $\scriptL(P) \subseteq \scriptL(Q) \cup \scriptL(A)$ with $\vdash \bs{\varphi_e}\ Q\ \bs{\varphi_f}$ and that $\vdash_\beta \bs{\varphi_e}\ \scriptL(A)\ \bs{\varphi_f}$, then we have $\vdash_\beta \bs{\varphi_e}\ P\ \bs{\varphi_f}$.
\end{theorem}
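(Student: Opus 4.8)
The plan is to derive the soundness theorem by chaining together the pieces already assembled in the paper: the abstract monotonicity rule \ref{eq: the abstract proof rule}, its concretisation \ref{eq: the pre-proof rule}, and the coincidence result \cref{theorem: coincidence of two probabilities for CFMDP}. The only genuine gap between \ref{eq: the pre-proof rule} and \ref{eq: the proof rule} is the passage from a statement about $\scriptL(P)$ (a set of traces, scored by $\prob[\Phi][\scriptL]{-}$) to a statement about $P$ itself (a CFMDP, scored by $\prob[\Phi]{-}$), and this is exactly what \cref{theorem: coincidence of two probabilities for CFMDP} bridges.

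Concretely, first I would fix an arbitrary program $P$, specification $\Phi = (\varphi_e,\varphi_f)$, threshold $\beta$, and modules $Q$, $A$ satisfying the three hypotheses. I would observe that since $P$ is a program, its CFA is a (normalised, by the standing assumption) CFMDP, so \cref{theorem: coincidence of two probabilities for CFMDP} gives $\prob[\Phi]{P} = \prob[\Phi][\scriptL]{\scriptL(P)}$; hence it suffices to show $\prob[\Phi][\scriptL]{\scriptL(P)} \le \beta$. Next, from the hypothesis $\scriptL(P) \subseteq \scriptL(Q) \cup \scriptL(A)$ and monotonicity of $\mu_\Phi = \prob[\Phi][\scriptL]{-}$ on the regular-language $\sigma$-algebra (the abstract rule), I get $\prob[\Phi][\scriptL]{\scriptL(P)} \le \prob[\Phi][\scriptL]{\scriptL(Q) \cup \scriptL(A)}$.

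The remaining step is to bound $\prob[\Phi][\scriptL]{\scriptL(Q) \cup \scriptL(A)}$ by $\beta$. Here I would unfold the definition \cref{eq: prob of trace set}: for a fixed $s \models \varphi_e$ and a fixed mergeable $\Pi \in \mergeable(\scriptL(Q)\cup\scriptL(A))$, split the sum $\sum_{\tau\in\Pi} wt(\tau)\cdot[\interpret{\tau}(s)\models\neg\varphi_f]$ according to whether $\tau \in \scriptL(Q)$ or not. Every $\tau \in \scriptL(Q)$ contributes $0$ because $\vdash \bs{\varphi_e}\ Q\ \bs{\varphi_f}$ means $\interpret{\tau}(s) \models \varphi_f$, killing the indicator. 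The remaining traces lie in $\Pi \setminus \scriptL(Q) \subseteq \scriptL(A)$; I would note $\Pi \setminus \scriptL(Q)$ is again a subset of $\Pi$ sharing prefixes up to a probabilistic label (a sub-language of a CFMC's language, hence still mergeable — or, more cheaply, use monotonicity of $\mu_\Phi$ together with $\Pi \cap \scriptL(A) \subseteq \scriptL(A)$), so its weighted violating sum is at most $\sup_{\Pi'\in\mergeable(\scriptL(A))}\sum_{\tau\in\Pi'} wt(\tau)\cdot[\interpret{\tau}(s)\models\neg\varphi_f]$, which after taking $\max_{s\models\varphi_e}$ is $\prob[\Phi][\scriptL]{\scriptL(A)} \le \beta$ by the third hypothesis. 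Taking suprema over $\Pi$ and maxima over $s$ then yields $\prob[\Phi][\scriptL]{\scriptL(Q)\cup\scriptL(A)} \le \beta$, completing the chain.

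The main obstacle I anticipate is the bookkeeping in that last step: one must be careful that removing the $\scriptL(Q)$-traces from a mergeable $\Pi$ does not break mergeability, or else route around it by invoking monotonicity of $\mu_\Phi$ on $\Pi\cap\scriptL(A)\subseteq\scriptL(A)$ rather than manipulating $\Pi$ directly — the latter is cleaner since $\scriptL(A)$ is regular and the abstract rule already gives what is needed. A secondary subtlety is making sure the $\max_{s\models\varphi_e}$ is handled uniformly: the same $s$ must be used when comparing the $Q$-part and the $A$-part, which is automatic because we fix $s$ before splitting the sum. Everything else is a direct appeal to \ref{eq: the pre-proof rule} and \cref{theorem: coincidence of two probabilities for CFMDP}, so I expect the proof to be short once the splitting argument is stated carefully.
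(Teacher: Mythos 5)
Your proof follows the same chain as the paper's: the coincidence theorem to pass from $\prob{P}$ to $\prob{\scriptL(P)}$, monotonicity to pass to $\scriptL(Q)\cup\scriptL(A)$, and then disposing of the $Q$-contribution --- the paper merely compresses your final splitting argument into an appeal to a ``Probabilistic Union Bound'' together with the implicit fact $\prob{\scriptL(Q)}=0$, so you are in effect proving the step the paper only cites. The mergeability worry you flag is real (an arbitrary subset of a CFMC language need not be mergeable) but resolvable more directly than either of your two proposed routes: for a mergeable $\Pi$ the set $\Pi\cap\scriptL(A)$ is again mergeable (the product of the witnessing CFMC with a DFA for $\scriptL(A)$ is still a CFMC), it lies in $\mergeable(\scriptL(A))$, and for $s\models\varphi_e$ it contains every violating trace of $\Pi$, which yields the required bound by $\prob{\scriptL(A)}\le\beta$.
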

\begin{proof}
	The proof is given by the following (in)equation chain:
	\begin{align*}
		& \prob{P} &
		\\
		= \ & \prob{\scriptL(P)} & (\text{\cref{theorem: coincidence of two probabilities for CFMDP}})
		\\
		\le \ & \prob{\scriptL(Q) \cup \scriptL(A)} & 
		\\
		\le \ & \prob{\scriptL(Q)} + \prob{\scriptL(A)} & (\text{Probabilistic Union Bound})
		\\
		= \ & \prob{\scriptL(A)} \le \beta
	\end{align*}
\end{proof}

The proof for completeness credits mainly to \cref{theorem: coincidence of two probabilities for CFMDP}, which is given by:
\begin{theorem}
	\label{theorem: completeness of the proof rule}
	For any program $P$ and given specification $(\varphi_e, \varphi_f)$ with threshold $\beta$, if $\vdash_\beta \bs{\varphi_e}\ P\ \bs{\varphi_f}$ then there exists such control-flow automatons $Q$ and $A$ that: $\scriptL(P) \subseteq \scriptL(Q) \cup \scriptL(A)$ with $\vdash \bs{\varphi_e}\ Q\ \bs{\varphi_f}$ and that $\vdash_\beta \bs{\varphi_e}\ \scriptL(A)\ \bs{\varphi_f}$.
\end{theorem}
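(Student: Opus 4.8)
The strategy is to take the trivial decomposition $Q := $ the automaton accepting no traces (or equivalently one whose accepting set is empty) and $A := P$ itself. Then the first premise $\scriptL(P) \subseteq \scriptL(Q) \cup \scriptL(A)$ holds trivially since $\scriptL(A) = \scriptL(P)$. The second premise $\vdash \bs{\varphi_e}\ Q\ \bs{\varphi_f}$ holds vacuously because $\scriptL(Q) = \emptyset$, so there are no traces for which the Hoare triple could fail. It remains only to establish the third premise $\vdash_\beta \bs{\varphi_e}\ \scriptL(A)\ \bs{\varphi_f}$, i.e. $\prob[\Phi][\scriptL]{\scriptL(P)} \le \beta$.

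For this, I would invoke \cref{theorem: coincidence of two probabilities for CFMDP} directly. Since $P$ is (after normalisation) a CFMDP, we have $\prob[\Phi][\scriptL]{\scriptL(P)} = \prob[\Phi]{P}$. The hypothesis of the theorem is precisely $\vdash_\beta \bs{\varphi_e}\ P\ \bs{\varphi_f}$, which by definition unfolds to $\prob[\Phi]{P} \le \beta$. Chaining these gives $\prob[\Phi][\scriptL]{\scriptL(P)} \le \beta$, which is exactly the third premise. Hence the triple $(Q, P, \cdot)$ witnesses the conclusion, and the proof is complete. One should also remark that normalisation of $P$ does not change $\scriptL(P)$ nor the violation probability (by the construction referenced in the appendix and the fact that CFMDP violation probability depends only on the structure via its traces), so applying the theorem to the normalised version is harmless.

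\textbf{Where the difficulty lies.} As presented, the completeness proof is almost entirely bookkeeping: the content has already been front-loaded into \cref{theorem: coincidence of two probabilities for CFMDP}, which is itself the real theorem and whose hard direction (\cref{eq: true def >= trace def}) rests on \cref{theorem: subset CFMC and strategy isomorphism} and the normalisation construction. So the only genuine obstacle here is a definitional/sanity one: checking that an "empty" certified module $Q$ is a legitimate PCFA under the paper's definition (it is — $L$, $\Sigma$, $\delta$ may be taken so that $\ell_e$ is unreachable, or simply $\delta = \emptyset$ with $\ell_0 \neq \ell_e$), and that the degenerate decomposition is admitted by the proof rule's side conditions. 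If the authors instead want a \emph{non-trivial} or more informative decomposition for completeness (e.g. to match the CEGAR algorithm's behaviour), the argument would need to actually separate $\scriptL(P)$ into a provably-safe part and a residual violating part, which would require a more careful construction; but for the bare completeness statement as phrased, the trivial witness suffices.
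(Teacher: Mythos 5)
Your proposal is correct and matches the paper's own proof essentially verbatim: the authors likewise take $Q$ to be an empty automaton, let $A$ be the CFA of $P$, and conclude via $\prob{\scriptL(A)} = \prob{\scriptL(P)} = \prob{P} \le \beta$, where the middle equality is exactly \cref{theorem: coincidence of two probabilities for CFMDP}. Your additional remarks about normalisation and the legitimacy of an empty $Q$ are sensible sanity checks but do not change the argument.
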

\begin{proof}
	We just let $Q$ to be an empty automaton and let $A$ be the control-flow automaton from $P$, so we have $\prob{\scriptL(A)} = \prob{\scriptL(P)} = \prob{P} \le \beta$.
\end{proof}

\section{An Automated Algorithm via CEGAR}
\label{sec: algorithm}


With insight from the soundness of \ref{eq: the proof rule}, we present a new automated probabilistic program verification method driven by the famous \emph{CounterExample-Guided Abstraction Refinement (CEGAR)} procedure, which can be seen as an extension to the CEGAR-driven non-random trace abstraction algorithm \cite{trace_abstraction}.

In this section, we first present the algorithm's main process and then present its details. Finally, we present some discussion on the ability and limitation of this algorithm and discussion on comparison with \cite{trace_abstraction_modulo_probability}, which is another extension to \cite{trace_abstraction}.

\subsection{Framework}
The framework of our algorithm is presented in \cref{figure: algorithm
framework}. 
It consists mainly of two loops. 
In the right loop, normal non-random trace abstraction is performed.
This process updates
the certified module $Q$ with more non-violating traces.
The modification to $Q$ must maintain $\vdash \bs{\varphi_e}\ Q\ \bs{\varphi_f}$.
The left loop manipulates the violating module $A$. During the process, $A$ will typically undergo expansions and shrinks to classify more traces and find a possible counterexample. 
Additionally, $A$ that takes part in the emptiness check $\scriptL(P \setminus Q \setminus A) = \emptyset$ in \cref{figure: algorithm framework} must always have that $\vdash_\beta \bs{\varphi_e}\ A\ \bs{\varphi_f}$.

\begin{figure}
	\includegraphics[width=\linewidth]{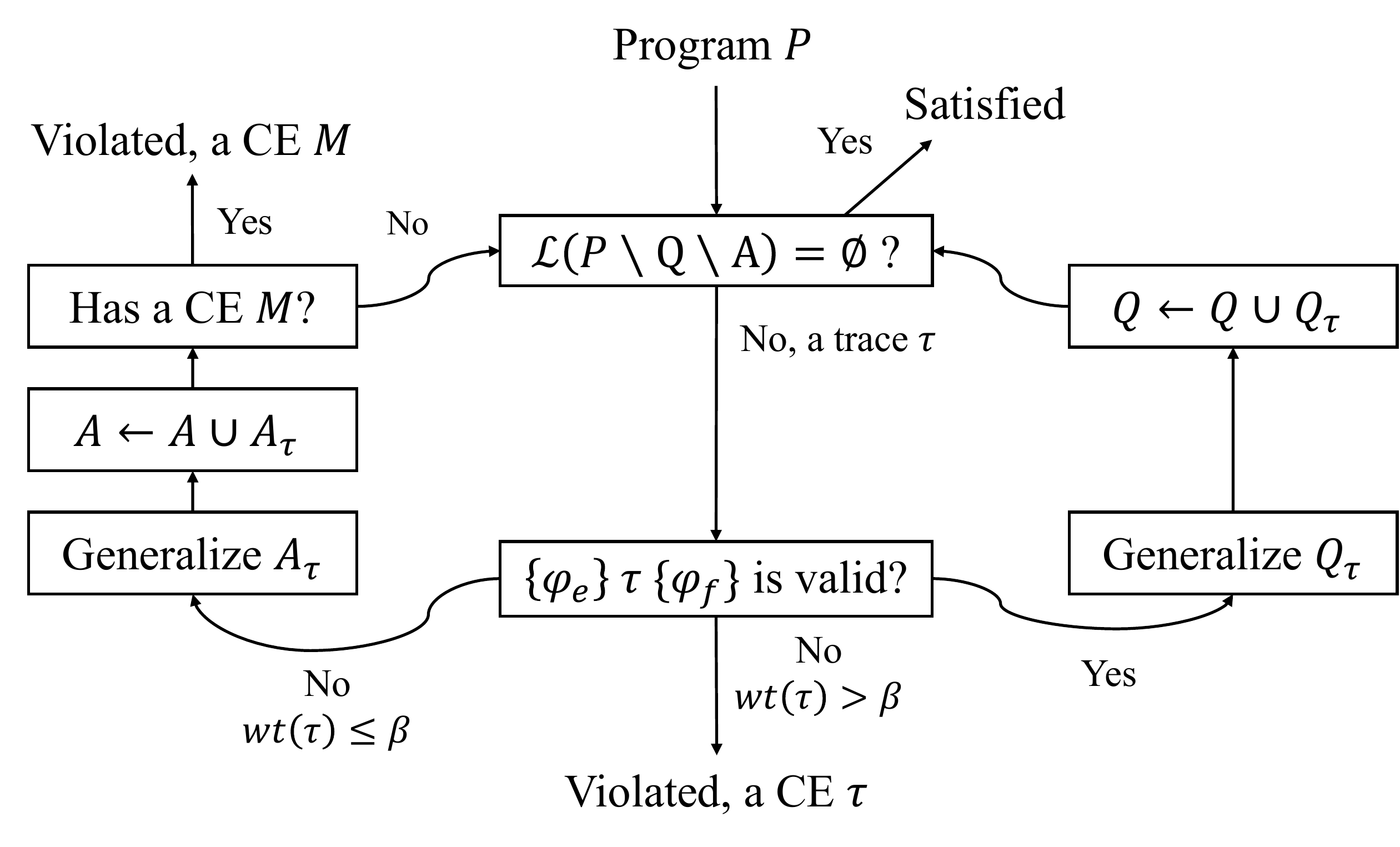}
	\caption{Probabilistic Trace Abstraction}
	\label{figure: algorithm framework}
\end{figure}

With these principles in panorama, let's get one step closer.
To begin with, we check whether traces of $P$ is covered by $Q$ and $A$,
if it is, by the proof rule and principles above, we return SAT as a result; 
if not, we further proceed with our process by picking a trace with \emph{minimal} length and try classifying the trace.

The classification is based on:
1) whether it is possible (exists a state $s \models \varphi_e$) to reach a wrong state that satisfies $\neg \varphi_f$ and
2) the weights of the trace.
The trivial case is when the found trace $\tau$ is violating, with weight $> \beta$, this means we have found a valid counterexample, 
so we just return UnSAT along with a counterexample consists of this trace and the error pre-condition $\varphi_e \wedge wp(\tau, \neg \varphi_f)$.
Here, $wp$ is the famous \emph{weakest pre-condition} \cite{dijkstra1975guarded} operator.
By conjunction of $\varphi_e$ and $wp(\tau, \neg \varphi_f)$, we identify the exact condition a program state needs to satisfy to start from this trace to reach a wrong state.

Then, if the trace $\tau$ is non-violating, we generalise it into a CFA $Q_\tau$ and enrich the certified module $Q$ with $Q_\tau$.
The traces accepted by $Q_\tau$ are all non-violating.
Intuitively they are considered non-violating with ``near reasons'' to $\tau$.
In other words, we somehow ``learn'' the ``reason'' of non-violation of the found trace $\tau$ and try to \emph{reuse} this reason to a set as much as possible. 
Details of this generalisation are presented in \cref{subsec: non-probable generalisation}.
	
The most interesting case is when the trace $\tau$ is violating, but with weight $\le \beta$.
In this case,
another kind of generalisation is performed on $\tau$ to form a CFA $A_\tau$. 
Effectively, this generalisation differs from the above one mainly in that rather than a full proof, 
we here only learn a ``semi-'' one.
This means we allow non-violating traces to enter $A_\tau$ rather than pure violating ones.
Intuitively, this generalisation just finds traces of ``near reason'' to $\tau$ but does not strictly verify their violation.
Details on the generalisation of this kind are introduced below in \cref{subsec: probable generalisation}.
After generalisation, $A$ is expanded with traces in $A_\tau$.
The expansion is followed by an examination on $A$ aiming at either finding a possible counterexample or confirming non-violating.
During the process, \emph{shrinks} may be performed on $A$, in contrast to the non-violating case where only expansions are performed.
Details on this process are presented in \cref{subsec: examine A}.

\subsection{Generalisation for Non-Violating Traces}
\label{subsec: non-probable generalisation}
In this part, we briefly describe generalisation in the right loop where a certified automaton $Q_\tau$ is constructed from non-violating $\tau$.
In this construction, we require $Q_\tau$ to contain only non-violating traces and $\tau$ is in $Q_\tau$.
This is essentially the same as the non-random trace abstraction \cite{trace_abstraction}.

\paragraph{Floyd-Hoare Automaton}
The host of our generalisation is, as in the non-random case, a Floyd-Hoare automaton, where every location is attached with a proposition, and labels between locations should satisfy the Hoare triple between.
Formally, a Floyd-Hoare automaton is a pair $(\scriptA, \lambda)$ where $\scriptA$ is a PCFA and $\lambda$ is a function from location set of $\scriptA$ to the set of propositions. And we require that for every $(\ell, \sigma, \ell')$ in the transition set of $\scriptA$, the Hoare triple $\bs{\lambda(\ell)}\ \sigma\ \bs{\lambda(\ell')}$ holds.

\paragraph{Generalisation}
The process of generalisation for a trace $\tau$ is mainly divided into three phases.
In the first step, one performs \emph{propositions insertion} on $\tau$ which forms a Floyd-Hoare automaton that has $\tau$ as the only trace.
We call this process ``propositions insertion'' as it is essentially about inserting (attaching) a proper proposition (to the location) between every two labels, the head and the end of $\tau$.
In the second step, one merge locations that have the same proposition.
In the last step, edges are added to the Floyd-Hoare automaton formed in the previous step, while maintaining the validity of the Floyd-Hoare automaton -- 
an edge with label $\sigma \in \Sigma$ is added from $\ell$ to $\ell'$ iff $\bs{\lambda(\ell)}\ \sigma\ \bs{\lambda(\ell')}$ is valid, 
where $\Sigma$ is the label set of current interests, and $\lambda$ is the proposition assigning function of the Floyd-Hoare automaton.

\paragraph{Propositions Insertion for Non-Violating Traces}
As mentioned above, proposition inserting is the first process in generalisation. 
Intuitively, this means to ``find a reason'' for why the given trace is non-violating.
Flavour thereof can be found in \cref{figure: motivation example category 1,figure: motivation example category 2}.
In non-randomised trace abstraction, this is usually done by \emph{interpolation}.
A bunch of possible interpolation techniques are available, e.g.~Craig interpolation \cite{craig_interpolation}, a new technique by Christ et al.~\cite{smt_interpol}.

\subsection{Generalisation for Violating Traces}
\label{subsec: probable generalisation}

The generalisation for violating traces is mainly the same as described above for non-violating traces except for the following differences:

\paragraph{Propositions Insertion}
To begin with, the usual technique for non-violating traces -- interpolation, is not directly applicable for violating traces, for it is a procedure for \emph{UnSAT} propositions. 
We here handle the insertion with \emph{weakest pre-condition} computation -- we compute backwardly from $\neg \varphi_f$ and insert the weakest pre-condition as the proposition before a label.
At the head of the trace, we use $\varphi_e \wedge wp(\tau, \neg \varphi_f)$ as the proposition with the same reason introduced above.

\paragraph{Generalised Result For Violating Traces}
As another major difference to mention, which is also a deliberate result of our propositions insertion method, 
the generalised result for violating traces is radically different from the non-violating one --
the resulting Floyd-Hoare automaton cannot guarantee general violation of traces thereof.
As a technical detail, this is mainly because traces with self-contradictory semantics may arise, 
for example, a trace like $...$ \cbx{$ X ::= 0$} \cbx{$ \assume\ X > 0$} $ ...$ may appear if the propositions around do not mention the variable $X$.
However, we in general deliberately allow such traces. 
This is because: 
in general, violating traces of a program is not a regular language and hence not expressible exactly with PCFA, 
so we must allow to some extend over-approximations. 
However, this again brings about some new problems, details of which are discussed below in \cref{subsec: property of the algorithm}.
In principle, this design is essentially a result of balancing.

\subsection{Examine $A$}
\label{subsec: examine A}
The final part is about how to examine the obtained $A$ to either prove its non-violation or extract a counterexample therefrom. 
Before this process, we assume that $A$ contains only traces in $P$, 
as it is pointless to discuss with traces not in $P$. 
Technically, this is done by intersecting $A$ with $P$, minimising the result followed by normalisation. 
Then, we perform this examination via iteration whose framework is presented in \cref{figure: examine A}.

\begin{figure}
	\includegraphics[width=\linewidth]{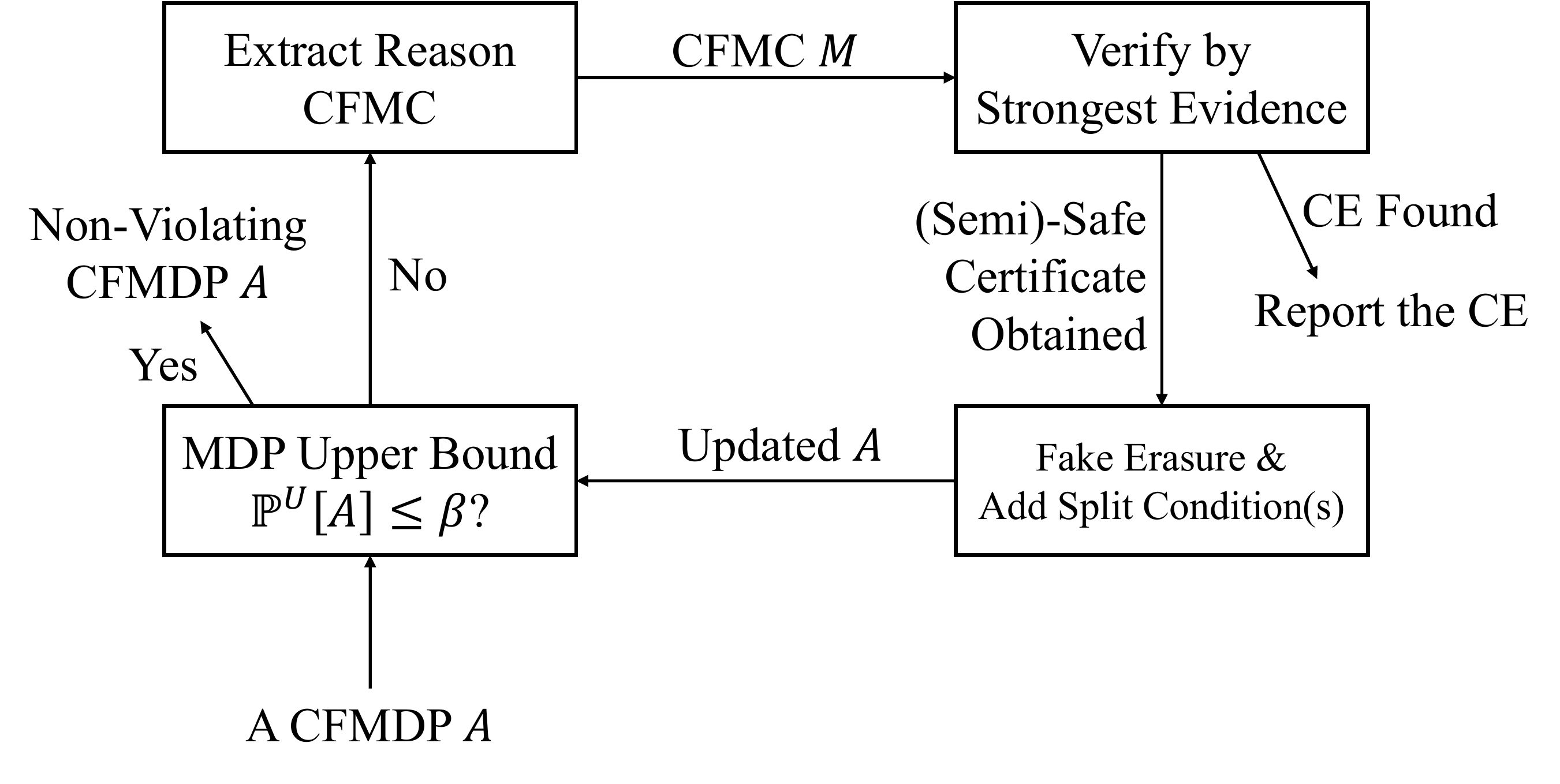}
	\caption{Examine A}
	\label{figure: examine A}
\end{figure}

\paragraph{MDP Upper Bound of CFMDP \& Reason CFMC}
The first obstacle to overcome is to prove non-violation of a given CFMDP $\scriptA$.
A natural idea is that we compute an upper bound of the maximum violating probability, and show that this upper bound $\le$ threshold $\beta$. 
For this upper bound, the definition of definition brings about a possible solution: 
observe that a proper upper bound can be that we just set the value of the indicator function $[-]$ to be $1$ in \cref{eq: prob for CFMDP} -- 
that is: we define the \emph{MDP upper bound} $\prob[][U]{\scriptA}$ of a CFMDP $\scriptA$ to be:
\begin{align}
	\prob[][U]{\scriptA} := &\ \max_{s \models \varphi_e} \sup_{\psi \in \scriptS(\scriptA)} \sum_{\tau \in \scriptL(\scriptA^\psi)} wt(\tau) \cdot 1 \\
	= &\ \sup_{\psi \in \scriptS(\scriptA)} \sum_{\tau \in \scriptL(\scriptA^\psi)} wt(\tau)
	\label{eq: prob of upper bound of CFMDP}
\end{align}

Notice that the \cref{eq: prob of upper bound of CFMDP} is just the \emph{maximum reachability probability} of an MDP, 
which can be computed by existing standard methods. 

Adopting this method, we then call the underlying CFMC (which comes from a specific strategy) that produces this maximum probability to be the \emph{reason} CFMC.
Next, we examine this CFMC to see if it is a true violating structure, as this CFMC is just an over-approximation.

\paragraph{Verify by Strongest Evidence}
The verification of the semantics of such a complex structure like CFMC that may contain loops is hard.
Such hardness is discussed more deeply in \cite{pcegar}.
In this work the authors declared that such complex structures are not ``directly amenable to conventional methods''.
Confronted by a similar difficulty like them, we adopt in this part their general idea of verifying via \emph{strongest evidence}.
More concretely, we verify traces one by one according to their weights, 
with the algorithm presented in \cite{trace_enumeration} like \cite{pcegar}, 
and see whether we can find a counterexample or not.
Notwithstanding, for that our model is different from theirs,
details we are going to tackle is also different.

Before delving into details of this method, we first see what \emph{kinds} of traces may exist during the verification. 
Rather than simply the verifiably violating and non-violating traces, one problem is \emph{compatibility} between traces -- 
the path conditions between different violating traces may not be compatible! For example, two violating traces for \cref{example: motivation}, namely 
\cbx{$X ::= 0$} \cbx{$ \rprobbranch[0]$} \cbx{$ \stskip$} \cbx{$ \assume\ C > 0$} \cbx{$ \rprobbranch[1]$} \cbx{$ X ::= X + 1$} \cbx{$ \assume\ \neg (C > 0)$}
and 
\cbx{$X ::= 0$} \cbx{$ \rprobbranch[0]$} \cbx{$ \stskip$} \cbx{$ \assume\ C > 0$} \cbx{$ \rprobbranch[1]$} \cbx{$ X ::= X + 1$} \cbx{$ \assume\ C > 0$} \cbx{$ \rprobbranch[1]$} \cbx{$ X ::= X + 1$} \cbx{$ \assume\ \neg (C > 0)$}
are not compatible, as the first trace requires $C$ to be exactly $1$ while the second one requires $C$ to be exactly $2$.

\medskip

However, in this way, there are actually $2^n + 1$ kinds of traces during the verification, 
where $n$ is the number of violating traces, and the final $1$ is for non-violating traces, 
which is called \emph{fake} in the following. 
To avoid the possible explosion from this, we optimise this process, and let there be just $3$ kinds -- 
we set a \emph{canonical} set of traces called the \emph{mainstream} set, which is updated on-the-fly.
So a trace during the process is either 1) a violating \& compatible trace, 2) a violating but incompatible trace or 3) a fake trace.
A trace is added to mainstream only when its path condition is compatible with \emph{all} those of traces currently in the mainstream. 
We call the conjunction of $\varphi_e$ and all path conditions inside mainstream to be \emph{total pre-condition}.

The termination condition of verifications can be briefly summarised as the following two cases:
The first is when the mainstream is enlarged to have total probability $> \beta$, which means a proper counterexample is found -- 
it consists of the mainstream set along with the error pre-condition total pre-condition.
the second is when the accumulated probabilities of fake or incompatible traces get $\ge p - \beta$, 
where $p$ is the current upper bound value of violation
This means even if all left traces are violating, its probability will not be $> \beta$. 
And such accumulated set of fake or incompatible traces with their total probability are called \emph{(semi)-safe certificate}.
Details of the process including proof for termination is given in \cite{pcegar}.

One may then argue that the thus obtained safe certificate is not sound -- 
there may still exist underlying counterexample inside this CFMC found, 
but just with a different error pre-condition other than the total pre-condition of the current mainstream.
Yes, indeed, 
and that is why it is just called a \emph{semi}-safe certificate.
Furthermore, in order to make this optimisation not affect the total correctness, 
we'll need to \emph{add split conditions}, which is discussed right below, 
so to ensure a different form of the next possible reason CFMC, 
which allows us to find for a different mainstream.

\paragraph{Fake Erasure \& Split Conditions Addition}
If a (semi)-safe certificate is obtained, we then need to prevent the re-appearance of the same reason MC again. 
This involves two processes -- one for fake and one for violating but incompatible traces. 
The process for fake traces is trivial -- we again generalise it like in the right loop for $P$ and then erase them from the current $A$.

The process for incompatible traces is more interesting. 
In this case, what we essentially need is to prevent those incompatible traces from mixing into a single CFMC again. 
To reach this, the simplest thing to do, from the angle of how CFMC comes from CFMDP, is to let the two trace sets belong to different strategies, or even within the same strategy, a trace on another side should be a fake.
This is the idea behind our solution -- adding \emph{split conditions}. 
We add a new starting node before the initial location of $A$ and add two edges -- $\assume\ H$ and $\assume\ \neg H$ pointing to the original initial location of $A$ 
and set this new node as the starting node of $A$, where $H$ is the total pre-condition of the current mainstream.
After that, we erase the found mainstream and incompatible traces attached with $\assume\ \neg H$ and $\assume\ H$ respectively.
Furthermore, to avoid explosion from split conditions, we perform the optimisation that after adding the first split conditions, we modify the common initial assumption statement and cut it into two.

\subsection{Properties of the Algorithm}
\label{subsec: property of the algorithm}
After the above description to details of the algorithm framework, we then discuss the properties of our algorithm.

\paragraph{Soundness \& Completeness}
By the proof rule \cref{eq: the proof rule}, the algorithm framework, given the ability of the underlying theory solver is guaranteed, if an answer is produced, the answer is sound and complete.

\paragraph{Termination \& Refutational Completeness}
As may have been noticed, the algorithm framework above just presents a \emph{semi}-algorithm that may never terminate. This phenomenon comes from that: 
1) in the right loop, the interpolation may not find a truly useful ``reason'' of satisfiability, and thus the enumeration on the RHS may not end;
2) in the left loop, although repeatedly picking the same reason CFMC is avoided, 
it is not guaranteed that the upper bound of $A$ will generally decrease after finite iterations of the process described in \cref{subsec: examine A}.

Theoretically, we can modify the above algorithm to a more ``complete'' one. 
Despite the theoretical merit, this modification seems to compromise both the ability to show satisfiability and the overall efficiency.
We ignore generalisation of $A$ and rather than requiring $A$ and $Q$ to fully cover $P$, we just let $A$ be a repository that keeps violating traces that have already been found,
and when the upper bound of $P \cup A$ is $\le \beta$, the algorithm produces SAT. 
This modification essentially guarantees the termination of the process for $A$ -- as this is just a process dealing with \emph{finite} traces.
In this way we have the good theoretical property of \emph{refutational completeness} in termination:
\begin{theorem}
	\label{theorem: modified algorithm refutationally complete}
	For any program $P$, and specification $(\varphi_e, \varphi_f)$ with threshold $\beta$, if $P$ does not satisfy $\vdash_\beta \bs{\varphi_e}\ P\ \bs{\varphi_f}$, then the above modified algorithm will eventually terminate and produce a counterexample.
\end{theorem}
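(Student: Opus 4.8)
The plan is to show that the modified algorithm, which maintains $A$ merely as a finite repository of already-discovered violating traces and reports SAT only when the MDP upper bound of $P \cup A$ drops to $\le\beta$, must terminate with a counterexample whenever $\prob{P} > \beta$. First I would fix the hypothesis $\prob[(\varphi_e,\varphi_f)]{P} > \beta$ and invoke \cref{theorem: coincidence of two probabilities for CFMDP} to rewrite this as $\prob[(\varphi_e,\varphi_f)][\scriptL]{\scriptL(P)} > \beta$; unfolding \cref{eq: prob of trace set}, there is a state $s \models \varphi_e$ and a \emph{mergeable} set $\Pi \subseteq \scriptL(P)$ with $\sum_{\tau\in\Pi} wt(\tau)\cdot[\interpret{\tau}(s)\models\neg\varphi_f] > \beta$. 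Since the weights are positive and sum to at most $1$, only countably many terms contribute, and in fact a \emph{finite} sub-collection $\Pi_0 \subseteq \Pi$ of violating traces already exceeds $\beta$ (take a finite prefix of the convergent series). This $\Pi_0$ is the finite witness I will argue the algorithm must eventually discover.

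The core of the argument is a progress/fairness claim about trace enumeration. The algorithm picks traces of $P$ in nondecreasing length (the "minimal length" rule), generalises, and in the left loop enumerates traces of the current reason CFMC by weight using the strongest-evidence procedure of \cite{trace_enumeration,pcegar}. I would argue that every trace of $P$ is eventually examined: there are only finitely many traces of each length, each round consumes at least one fresh trace, and traces previously absorbed into $Q$ or $A$ are not re-picked, so the picked-trace sequence is a genuine enumeration of $\scriptL(P)$. In particular, for each $\tau\in\Pi_0$ there is a finite stage by which $\tau$ has been identified as violating (possible to reach $\neg\varphi_f$ from some state satisfying the accumulated precondition) and placed into $A$. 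The one subtlety is compatibility: the traces of $\Pi_0$ share the common initial state $s$, so their path conditions are jointly satisfiable; hence when the examine-$A$ routine builds its mainstream set, the members of $\Pi_0$ are mutually compatible and none is classified "fake" or "incompatible" relative to the others. Once all of $\Pi_0$ resides in $A$, the MDP upper bound computation on $P\cup A$ — which is the maximum reachability probability, hence at least the weight $\sum_{\tau\in\Pi_0} wt(\tau) > \beta$ contributed by the compatible violating traces under the strategy realising $\Pi_0$ — cannot certify SAT; instead the mainstream reaches total probability $>\beta$ and the termination condition of \cref{subsec: examine A} fires, returning the counterexample $(\Pi_0,\,\varphi_e\wedge\bigwedge_{\tau\in\Pi_0} wp(\tau,\neg\varphi_f))$, which is nonempty since $s$ satisfies it.

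I expect the main obstacle to be making the enumeration-fairness claim airtight in the presence of the split-condition machinery. Adding $\assume\ H$ / $\assume\ \neg H$ nodes reshapes $A$ and, with it, which reason CFMC is produced next; I must check that this reshaping never permanently hides a needed violating trace — i.e., that after finitely many split insertions the trace set of $\Pi_0$ still survives on one branch (the $\assume\ H$ branch when $H$ is implied by, or consistent with, the precondition of $s$), and that split conditions are added only finitely often before $\Pi_0$ is collected. The clean way to handle this is to note that each split strictly separates two previously-co-resident incompatible traces, and since $\Pi_0$ is internally compatible no split ever separates two members of $\Pi_0$; combined with the finiteness of $\Pi_0$ and the by-length enumeration of $P$, only finitely many rounds can elapse before all of $\Pi_0$ is in $A$ and mutually in the mainstream. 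A secondary point requiring care is that the algorithm's SAT-check uses $P\cup A$ rather than $A$ alone, so I should confirm the upper bound on $P\cup A$ is monotone non-increasing in nothing — rather, it is simply always $\ge$ the weight of any compatible violating subset found, which suffices; we never need it to decrease, only to stay above $\beta$, which the presence of $\Pi_0$ guarantees.
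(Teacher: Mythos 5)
Your proposal is correct and follows essentially the same route as the paper's proof: extract a finite violating witness set whose total weight exceeds $\beta$ (the paper cites the strongest-evidence theorem of \cite{trace_enumeration} where you take a finite partial sum of the convergent series, and anchors everything at a single initial state $s$ exactly as you do), then use the minimal-length picking rule plus the finiteness of traces of each length to argue the witness set is fully classified into $A$ after finitely many rounds, at which point the counterexample fires. Your additional care about mutual compatibility of the witness traces and about the split-condition machinery is a welcome tightening of details the paper's proof leaves implicit, but it does not change the underlying argument.
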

\begin{proof}
	If $\not\vdash_\beta \bs{\varphi_e}\ P\ \bs{\varphi_f}$, there must exist such a CFMC that (and whose trace set) has violating probability $> \beta$.
	By the theorem of the existence of the strongest evidence proved in \cite{trace_enumeration}, there must exist such a finite set that has total probability $> \beta$.
	Then, by that in any $P$, the traces of a certain finite length is finite, and that after an iteration, a picked trace will be either in $A$ or $Q$ and not in the new $P$, 
	we have that after a certain finite steps of iterations, the algorithm either terminates or $P$ does not contain any trace of length less than or equal to a certain given length -- 
	recall that in the main loop, every time we pick a trace among the shortest length.
	By the facts above, the theorem is easily shown by taking the maximum length $N$ of the existent counterexample, and the algorithm must terminate when all traces in $P$ have length $> N$.
\end{proof}

\paragraph{Comparison with \cite{trace_abstraction_modulo_probability}}
Smith et al. proposed another algorithm that extends trace abstraction with probability based on program synthesis.

As is discussed in \cref{sec: intro}, one fundamental difference is the style of modeling randomness.
In general, the range of problems that can be solved by our algorithm and theirs is incomparable. 
Their method is capable of solving problems with parameterised probabilities and thresholds, while our current algorithm is only compatible to constant ones. 
Considering programs with constant threshold, our algorithm is complete for finite trace programs, with proof similar to \cref{theorem: modified algorithm refutationally complete}. However, it's not the case for their method.
The inaccuracy comes from the \emph{union bound} principle $\prob{A \cup B} \le \prob{A} + \prob{B}$ they use for accumulation of probabilities, which causes the violation probability to explode quickly, and even greater than $1$.
Probabilities are computed by accumulating feasible traces in our method, enabling more accurate computation for general cases where probabilities are co-related.
Furthermore, they use a new kind of automata called \emph{failure automata},
where probabilities of failure is also recorded on each locations. We adopt Floyd-Hoare automata utilised in non-probabilistic trace abstraction, and thus we are able to re-use most of the existing facilities.

\section{Related Work}
\label{sec: related work}


\subsection{Related Works}
\paragraph{Probabilistic Imperative Program Verification}
The work by Smith et al.~
\cite{trace_abstraction_modulo_probability}
is the most relevant work in our direction and
is the pioneering work towards combining probabilistic program verification and trace abstraction.
As a comparison on algorithm framework is given above in
\cref{subsec: property of the algorithm},
we here focus on a more theoretical side.
In \cite{trace_abstraction_modulo_probability}, the authors proposed a proof rule based on data randomness.
Their rule is also sound and complete but is relatively trivial to prove,
as in data randomness, definition of program violation is equivalent to a simple sum over all possible traces.
Despite mathematical trivialness of proof, automation is hard to obtain.
While in our case, although proof to validity of the rules is non-trivial,
it provides new insights between CFA and set of traces in the control-flow randomness style.
Also, the proof rule leads to a relatively natural way to work with the CEGAR automation framework.
In all, our work and theirs have different focuses and adopt different approaches.
A unified framework that combines advantages from both sides may therefore be a natural direction for future work.

On the other hand, Wang et al.~
\cite{exponential_analysis_of_probabilistic_program} analysed the exponential bounds, a sub-case of probabilistic imperative program verification.
They found a novel fixed point theorem to compute arguably accurate upper and lower bounds of assertion violation probability of a given probabilistic imperative program.

Hermanns et al.~\cite{pcegar} extended the famous \emph{counterexample-guided abstract refinement (CEGAR)} framework \cite{cegar} to the probabilistic context, focusing especially on \emph{predicate abstraction}. In our work, we make abstraction on the possible execution traces, while in \cite{pcegar}, the abstraction is made on program states.
Another connection between ours and theirs is, as stated above, we both adopt the idea of strongest evidence during analysis.
Cousot et al.~\cite{prob_abstract_interpreation} introduced the abstract interpretation framework to probabilistic context, while we introduced trace abstraction to that.
Wang et al.~\cite{pmaf} proposed the framework \textsc{pmaf}, an elegant algebraic framework which unifies both \emph{data randomness} and \emph{control-flow randomness} and
is inspiring in that it may provide hints on combining our algorithm and the one by Smith et al.~\cite{trace_abstraction_modulo_probability}.

\paragraph{Trace Abstraction \& Probabilistic Analysis}
Trace abstraction \cite{trace_abstraction} is a successful technique based on automata theory for non-probabilistic imperative programs verification.
Our work is an extension of this technique to the probabilistic case.
Besides the work by Smith et al.~\cite{trace_abstraction_modulo_probability}, another efforts to combine probabilistic analysis and trace abstraction is the recent work by Chen et al.~\cite{ast_by_omega_decompose}.
The authors analysed the \emph{almost sure termination (a.s.t.)} problem that studies whether a program terminates with probability $1$.

\paragraph{Probabilistic Model Checking}
The development of 
probabilistic model checking (PMC) has enjoyed a huge success in the past two decades. 
It's a cross-domain research field that model checking tools like PRISM \cite{prism} has been applied to multiple areas~\cite{prism_application_algorithm,prism_application_biology,prism_application_protocol1,prism_application_protocol2,prism_application_industry}.

In PMC, objects are mainly Markov chains or Markov decision processes.
In the last decade, new models with transcending expressivity like \emph{recursive Markov chain (RMC)}~\cite{rmc,rmc_init}, or equivalently, \emph{probabilistic pushdown automaton (pPDA)} \cite{ppda} and their model checking problems \cite{ltlmc_rmc,BEKK13} has attracted much attention.
Higher-order models \cite{phors_long,phors} are also analysed recently.

In our work, we newly proposed two models, namely CFMC and CFMDP, which can be seen as another kind of extension. As it is very natural and trivial to cast CFMDP to MDP (and so for CFMC and MC), the mature techniques for MDP can be easily transplanted to CFMDP, which is what we have done in our algorithm. Besides, CFMDP can be easily cast to an infinite-state MDP model.

\paragraph{Pre-Expectation Calculus}
The pre-expectation calculus and the associated expectation-transformer semantics mainly by Morgan and McIver~
\cite{mciver2005abstraction,operational_semantics_of_pgcl}
is an elegant probabilistic counterpart of the famous predicate-transformer semantics by Dijkstra in his seminal \cite{dijkstra1975guarded}.
The semantics is designed to compute expectations, which can also be made use of to compute violation probability elegantly via expectation \cite{mciver2005abstraction}.
However, the key obstacle of its automation is the same as in predicate-transformer semantics -- about synthesise a (non-)probabilistic invariant, while our method is able to run automatically.

\section{Conclusion}
\label{sec: conlusion}


We present a new proof rule \textsc{ProbTA} that extends trace abstraction with probability in the style of control-flow randomness, and prove its soundness and completeness.
We propose two new probabilistic models CFMDP and CFMC which extend probabilistic models with control-flow information, which allow interaction between probability theory and program verification. We present an automated algorithm to apply our proof rule.
To conclude, we provide a theoretically comprehensive framework of probabilistic trace abstraction in the control-flow randomness style, which we believe could be extended and applied in future work.


\bibliography{bibs}

\appendix

\newpage

\section{Supplement Materials}
\label{sec: proof of isomorphism}


\subsection{Supplement Definitions}
In this section, we formally present some definitions so to get things formally proved in the following parts.

For simplicity, we denote the \emph{non-random} labels of a label set $\Sigma$ by $\Sigma^-$, that is: $\Sigma^- := \Sigma \setminus \setdef{\dprobbranch \in \Sigma}{d \in \bs{\texttt{L}, \texttt{R}}}$.

Firstly, we formally define the set of \emph{actions} of a CFMDP. Given that the label set is $\Sigma$, the set of actions is given by $A_\Sigma := \setdef{i}{\dprobbranch \in \Sigma} \uplus \Sigma^-$.

The next formalism is \emph{finite-memory strategy}, abbreviated by just strategy below:

\begin{definition}[Finite Memory Strategy]
	A Finite-Memory Strategy with respect to a CFMDP with location set $L$ and action set $A$, is a tuple $(Q, \delta, q_0)$ where $Q$ is a \emph{finite} set called the internal states, $\delta$ is a \emph{partial} function with type $L \times Q \rightharpoonup A \times Q$ and $q_0 \in Q$ is the initial state.
\end{definition}

The application of strategies is standard, except that
because we relaxed the requirement for a full definition (that $\lprobbranch$ and $\rprobbranch$ is not strictly required to appear in pairs),
the result of application thus depends on whether the next location is defined.

\begin{definition}[Application of A Finite-Memory Strategy]
	Given a strategy $\psi (Q_\psi, \delta_\psi, q_0)$, and a CFMDP $\scriptA (L, \Sigma, \delta, \ell_0, \ell_e^\scriptA)$, the induced model of \emph{applying $\psi$ to $\scriptA$} is a CFA $\scriptA^\psi (L \times Q_\psi \uplus \bs{\ell_e}, \Sigma, \delta', (\ell_0, q_0), \ell_e)$ where:
	$\ell_e$ is a new location, and $\delta'$ is given in \cref{figure: transition of application}.
\end{definition}

By this definition, we have immediately:

\begin{theorem}
	For a strategy $\psi$ and CFMDP $\scriptA$, the resulting CFA $\scriptA^\psi$ is a CFMC.
\end{theorem}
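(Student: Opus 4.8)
The plan is to unfold the definition of a CFMC and verify the three conditions it imposes on $\scriptA^\psi$: that it is a deterministic PCFA, that its ending location has no outgoing edges, and that every node carries at most one action. All three follow by inspection of the transition relation $\delta'$ defined in \cref{figure: transition of application}, using that $\scriptA$ is itself a CFMDP — hence a deterministic PCFA with no edges out of its ending location — and that the transition component $\delta_\psi$ of the strategy is a partial \emph{function}.

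First I would note that the nodes of $\scriptA^\psi$ are of two kinds: pairs $(\ell, q) \in L \times Q_\psi$ and the fresh ending node $\ell_e$. By construction $\ell_e$ has no outgoing transitions, so it trivially meets both the "no edges out of the ending location" and the "at most one action" requirements. For a node $(\ell, q)$, the outgoing edges are dictated by the single value $\delta_\psi(\ell, q)$: if it is undefined, the node is a dead node with no outgoing edges; if $\delta_\psi(\ell, q) = (a, q')$, then exactly the action $a$ is enabled at $(\ell, q)$, and the internal-state component of every successor is the fixed $q'$.

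Next I would split on the action $a$. If $a \in \Sigma^-$ is a non-random label, determinism of $\scriptA$ gives at most one $\ell'$ with $\ell \xrightarrow{a} \ell'$ in $\scriptA$, so $(\ell, q)$ has at most one successor $(\ell', q')$ along the single label $a$ — one Dirac action. If $a = i$ is the identifier of a probabilistic choice, determinism of $\scriptA$ again gives at most one $\lprobbranch$-successor and at most one $\rprobbranch$-successor location in $\scriptA$; composing with the fixed $q'$, node $(\ell, q)$ has at most one outgoing $\lprobbranch$-edge and at most one outgoing $\rprobbranch$-edge, and these two edges together constitute the single action $i$. In both cases $(\ell, q)$ is associated with at most one action and the outgoing transitions are deterministic. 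Collecting these observations shows $\scriptA^\psi$ is a deterministic PCFA with no edges out of $\ell_e$ and at most one action per node, i.e.\ a CFMC.

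I do not expect a genuine obstacle: the statement is essentially a well-typedness check for the application construction. The only points requiring care are the bookkeeping of the fresh ending location $\ell_e$ and the partiality of $\delta_\psi$ (dead nodes), and remembering that "one action" for a probabilistic identifier bundles the two labels $\lprobbranch$ and $\rprobbranch$ together rather than counting them as two distinct actions.
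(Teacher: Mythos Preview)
Your proposal is correct and matches the paper's stance: the paper states the theorem as following ``immediately'' from the definition of strategy application without giving any further argument, and your proof is precisely the definitional unfolding that justifies this. The case split on whether $\delta_\psi(\ell,q)$ is undefined, yields a non-random label, or yields a probabilistic identifier is exactly the right granularity, and your use of the determinism of $\scriptA$ together with the functionality of $\delta_\psi$ cleanly gives both determinism of $\scriptA^\psi$ and the single-action-per-node condition.
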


\begin{figure*}[h]
	\begin{align*}
		\delta' :=\ & \setdef{(\ell, q) \xrightarrow{\sigma} T(\ell', q')}{\sigma \in \Sigma^-, \delta_\psi(\ell, q) = (\sigma, q'), \ell \xrightarrow{\sigma} \ell' \in \delta_\scriptA} 
		\\
		\uplus\ & \bigcup \setdef{\bs{(\ell, q) \xrightarrow{\probbranch_{(i, \lbranch)}} T(\ell_1, q'), (\ell, q) \xrightarrow{\probbranch_{(i, \rbranch)}} T(\ell_2, q')}}{\delta_\psi(\ell, q) = (i, q'), \bs{\ell \xrightarrow{\probbranch_{(i, \lbranch)}} \ell_1, \ell \xrightarrow{\probbranch_{(i, \rbranch)}} \ell_2} \subseteq \delta_\scriptA \\ (\ell_1, q') \in \dom(\psi), (\ell_2, q') \in \dom(\psi)} 
		\\ 
		\uplus\ & \setdef{(\ell, q) \xrightarrow{\probbranch_{(i, d)}} T(\ell', q')}{\delta_\psi(\ell, q) = (i, q'), \ell \xrightarrow{\probbranch_{(i, d)}} \ell' \in \delta_\scriptA \\(\exists \ell''.\ell \xrightarrow{\probbranch_{(i, \overline{d})}} \ell'' \in \delta_\scriptA \wedge (\ell'', q') \notin \dom(\delta_\psi)) \vee \forall \ell''. \ell \xrightarrow{\probbranch_{(i, \overline{d})}} \ell'' \notin \delta_\scriptA} \\
		& \text{with transform function $T$ given by: } \\
		& T(\ell, q) := \begin{cases}
		\ell_e & \ell = \ell_e \\
		(\ell, q) & o.w.
	\end{cases}
	\end{align*}
	\caption{$\delta'$ of Strategy Application}
	\label{figure: transition of application}
\end{figure*}

\subsection{Proof to \cref{theorem: subset CFMC and strategy isomorphism}}
\begin{proof}[Proof to \cref{theorem: subset CFMC and strategy isomorphism}]
	Let $\scriptA = (L_\scriptA, \Sigma, \delta_\scriptA, \ell_0^\scriptA, \ell_e^\scriptA)$ and 
	$\scriptM = (L_\scriptM, \Sigma, \delta_\scriptM, \ell_0^\scriptM,\ell_e^\scriptM)$, 
	where for simplicity, we assume $\scriptA$ and $\scriptM$ share the same label set.
	Also, as CFMC is deterministic, we write $\delta_\scriptM(\ell) = (\sigma, \ell')$ or $\delta_\scriptM(\ell) = (\probbranch_{(i, \lbranch)}, \ell') \wedge (\probbranch_{(i, \rbranch)}, \ell'')$ to denote set elements. 
	So we construct $$ \psi_\scriptM := (L_\scriptM \uplus \setdef{\ell_{(T, \ell, \ell')}}{T \in L_\scriptA \uplus \bs{\bot}, \ell, \ell' \in L_\scriptM \uplus \bs{\bot}}, \delta, \ell_0) $$ where $\bot$ is a dummy symbol that is in neither $L_\scriptA$ nor $L_\scriptM$ and $\delta$ is defined as the following: we first define for $\ell \in L_\scriptM$ a supporting function $\delta^-$ as:
	$$ 
	\delta^-(\ell_\scriptA, \ell) := \begin{cases}
		(\sigma, \ell') & \delta_\scriptM(\ell) = (\sigma, \ell'), \sigma \in \Sigma^- \\
		(i, \ell_{(T, \ell_2, \ell_3)}) & \begin{matrix}\delta_\scriptM(\ell) = 
		(\probbranch_{(i, \lbranch)}, \ell_2) \wedge (\probbranch_{(i, \rbranch)}, \ell_3), \\ 
		\ell_\scriptA \xrightarrow{\probbranch_{(i, \lbranch)}} T \in \delta_\scriptA
		\end{matrix}
		\\
		(i, \ell_{(T, \ell', \bot)}) & \begin{matrix}
			\delta_\scriptM(\ell) = (\probbranch_{(i, \lbranch)}, \ell'), \\
			\ell_\scriptA \xrightarrow{\probbranch_{(i, \lbranch)}} T \in \delta_\scriptA
		\end{matrix} \\
		(i, \ell_{(\bot, \bot, \ell')}) & 
			\delta_\scriptM(\ell) = (\probbranch_{(i, \rbranch)}, \ell')
	\end{cases}
	$$
	
	Then we define $\delta$ as:
	$$
	\delta(\ell_\scriptA, \ell) := \begin{cases}
		\delta^-(\ell_\scriptA, \ell) & \ell \in L_\scriptM \\
		\delta^-(\ell_\scriptA, \ell') & \ell = \ell_{(T, \ell', \_)}, \ell_\scriptA = T, \ell' \neq \bot \\
		\delta^-(\ell_\scriptA, \ell') & \ell = \ell_{(T, \_, \ell')}, \ell_\scriptA \neq T, \ell' \neq \bot
	\end{cases}
	$$
	
	Note that we use the next target location of $\scriptA$ to help determine the next choice for the strategy when encountering $\probbranch$, and by the assumption above that for any $i$ from a single location, $\lprobbranch$ and $\rprobbranch$ will point to different locations, this will work.
	
	Finally, an induction can be performed to show that the induced CFMC of $\scriptA^{\psi_\scriptM}$ has the same accepting traces as $\scriptM$.
	For both directions, a possible induction is performed on the length of non-stuck traces.
	Notice specially that, the node $(T, \ell_{(T, \ell_1^\scriptM, \ell)})$ in $\scriptA^{\psi_\scriptM}$ is effectively just $(T, \ell_1^\scriptM)$, and $(T, \ell_{(T', \ell, \ell_2^\scriptM)})$ where $T \neq T'$ is effectively just $(T, \ell_2^\scriptM)$.
\end{proof}

\subsection{Normalisation}
The normalisation to any CFMDP is given by the process below:
1) if the two branches all point to the same location which is their origin, say $\ell$, whose edge set except the current target two edges are called $S$, we create two new nodes $\ell_1$ and $\ell_2$ that: $\ell \xrightarrow{\lprobbranch} \ell_1$, $\ell \xrightarrow{\rprobbranch} \ell_2$ and $\ell_1 \xrightarrow{\rprobbranch} \ell_2$, $\ell_1 \xrightarrow{\lprobbranch} \ell$, $\ell_2 \xrightarrow\rprobbranch \ell$, $\ell_2 \xrightarrow\lprobbranch \ell_1$ for other out edges, $\ell_1$ and $\ell_2$ just copy from $\ell$; 
2) if two branches all point to another location other than the origin, we duplicate the target location and copy all its edges into another node. We perform this process one by one, so for every step, the number of such cases will decrease by $1$. This process will end up with a CFMDP satisfying the proposition;
3) proceed with the above process one by one, first for case 1) and then case 2), as case 1) will erase self-loops without generating new ones, and case 2) will erase non-self loops without generating self-loop and non-self loops if no self-loops exists so that this process will terminate.

\subsection{Discussion on Models}
As may have been noticed, our definition to CFA is a proper subclass of FSA -- the \emph{single-accepting-node} FSA.
So the language expressible is a proper subclass of regular languages.
Also, CFMDP requires that no out edges from the ending location, this is another proper restriction to expressivity.
However, all these do not affect our final result.
Because for any program traces, the class of single-accepting-node FSA is capable of expressing.
This is due to the famous theorem over the properties of this class -- a regular language is of the single-accepting-node class, iff for any two accepting strings $s_x$ and $s_y$, and any strings $s$ over the underlying alphabet (not restrict to the accepting ones), then $s_xs$ is an accepting string iff $s_ys$ is an accepting string.
This is clearly satisfiable by programs.
As strings are terminating ones, no further string with the prefix of a terminating string except itself is acceptable, as suggested by the word ``terminating''.
So, it makes no sense to allow CFMDP, a model mimics the requirement of program CFA to even have transitions out of the ending location which intuitively marks point of termination.
Furthermore, in effects, it does not affect computation in \cref{sec: algorithm}, as discuss below.
Firstly, the generalisation of traces $\tau$, violating or not, only produces single-accepting-node FSA.
Recall some standard facts about intersection on two DFAs that: 1) a location of the result is an accepting location, iff the two sources locations are all accepting; 2) a location of the result has a transition, iff the two sources locations all have this transition out.
And that differencing between two DFAs are just intersection with complement.
So, resulting $P$ by updating with differencing with $Q$ or $A$ just again produces single-accepting-node FSA.
Also, there will be no out edges in the ending node.
Furthermore, for $A$, although generalisation produces non-proper CFMDP, but before examining,
we require that $A$ is updated by intersecting with $P$ so to make discussion not pointless, this action also ensures a proper CFMDP.

\end{document}